\documentclass[11pt,a4paper]{article}
\usepackage{cmap}
\usepackage[T1]{fontenc}
\usepackage[utf8]{inputenc}
\usepackage{lmodern}
\ifdefined\pdfgentounicode
  \input{glyphtounicode}
\fi
\usepackage{geometry}
\usepackage{setspace}
\usepackage{microtype}
\usepackage{amsmath,amssymb,amsthm}
\usepackage{booktabs,longtable,tabularx,array}
\usepackage{enumitem}
\usepackage{hyperref}
\usepackage{caption}
\hypersetup{colorlinks=true,linkcolor=black,citecolor=black,urlcolor=black,
  pdftitle={When AI Generates Covariates: Causal Typing and Estimand Drift in Sequential Experiments},
  pdfauthor={Takes Fujita and Nobutaka Hattori},
  pdfsubject={Causal typing and estimand drift for generated covariates in sequential experiments},
  pdfkeywords={causal inference; artificial intelligence; generated covariates; estimand drift; causal typing; sequential experiments}}
\setlist[itemize]{leftmargin=2em}
\setlist[enumerate]{leftmargin=2em}
\newtheorem{definition}{Definition}
\newtheorem{assumption}{Assumption}
\newtheorem{theorem}{Theorem}
\newtheorem{proposition}{Proposition}
\newtheorem{corollary}{Corollary}
\newcommand{\indep}{\mathrel{\perp\!\!\!\perp}}
\newcommand{\E}{\mathbb{E}}

\newcommand{\calG}{\mathfrak{G}}
\newcommand{\calR}{\mathcal{R}}
\newcommand{\calC}{\mathcal{C}}
\newcommand{\calS}{\mathcal{S}}
\newcommand{\calL}{\mathcal{L}}
\newcommand{\id}{\mathrm{id}}
\newcommand{\obs}{\mathrm{obs}}
\newcommand{\emp}{\mathrm{emp}}
\newcommand{\supop}{\mathrm{sup}}
\newcommand{\true}{\mathrm{true}}
\newcommand{\expit}{\mathrm{expit}}
\newcolumntype{Y}{>{\raggedright\arraybackslash}X}

\title{When AI Generates Covariates:\\Causal Typing and Estimand Drift in Sequential Experiments}
\author{Takes Fujita\textsuperscript{1} \quad Nobutaka Hattori\textsuperscript{2}\\[4pt]
\small \textsuperscript{1}VRI\\
\small \textsuperscript{2}Department of Neurology, Juntendo University School of Medicine}
\date{}

\begin{document}
\maketitle

\begin{abstract}
AI-generated covariates from notes, conversations, images, and wearable streams can change the causal question when their roles are left unspecified. A generated feature may represent a treatment version, pre-action state, history, design variable, mediator, outcome proxy, observation process, or intercurrent event; these roles are not interchangeable.

We formulate a causal type discipline for sequential experiments: a versioned representation map, a causal role classifier, a claim-status filter, and an estimand lock. The lock fixes a standardized proximal effect before generated covariates enter the analysis. Under audit correctness and standard identification assumptions, admissible role assignments preserve this estimand. We apply the established conditional-covariance characterization of compression bias to substitution of generated representations for design-relevant states. A standardized decomposition separates compression, conditional-law, and standardization drift. Further results cover mediator adjustment, post-action leakage, marker-intervention conflation, outcome-guided discovery, and state-measurement error. Cluster-level orthogonal estimators distinguish empirical and superpopulation targets under repeated sessions and missing outcomes.

Simulations show that refinement helps when it retains design-relevant information, whereas design erasure, leakage, and same-data marker selection can produce bias or undercoverage. The framework places causal semantics and claim status before confirmatory inference with generated representations.
\end{abstract}

\noindent\textbf{Keywords:} causal inference; artificial intelligence; generated covariates; generated regressors; estimand drift; causal typing; sequential experiments; causal representation learning; treatment leakage; post-selection inference; causal admissibility

\section{Introduction}

Modern causal analyses increasingly use variables that did not exist as simple fields in a case report form, registry, database, or randomized-trial data set. Investigators generate covariates from clinical notes, diaries, speech, images, food logs, wearable streams, environmental records, mobile-phone traces, and human-machine interaction logs. These variables may be produced by a human coding protocol, a rule-based dictionary, a supervised model, a foundation model, a wearable-signal algorithm, a causal representation learner, or a hybrid human-AI workflow. Although AI-generated covariates are the motivating case, the framework applies to this broader class of generated representation maps. Once generated, they are often treated as ordinary covariates.

That step is not innocuous. A generated covariate has a causal role. A variable extracted before a decision point may be a treatment version, core state, candidate state marker, prior history, or design variable. A variable extracted after the decision point may be a mediator, outcome proxy, intercurrent event, observation indicator, or post-outcome descriptor. A variable selected because it was associated with the outcome may be useful for exploration but cannot support a confirmatory effect-modification claim unless selection is accounted for. Prediction does not decide causal admissibility.

The motivating example is a repeated-session clinical experiment in which a current-session explanation, cue, device context, or behavioral support is randomized and a proximal outcome is measured shortly afterward. The same participant contributes several sessions. The current-session effect may depend on medication phase, time since dose, fatigue, recent sleep, prior response, expectation, environmental context, or recent activity. AI systems can decompose diaries, clinical notes, conversations, and wearable streams into candidate state variables. The central question, however, is not how to extract more variables. It is how to prevent extracted covariates from changing the estimand without notice.

Although the example is clinical, the object of this paper is theoretical. We study generated covariates in sequential causal experiments. The paper develops a local causal-role theory: it asks whether a generated variable may play its proposed role under a locked estimand. This question precedes generated-regressor inference, which addresses downstream uncertainty after a role has already been accepted.

Typing is not certification. Typing states the role a generated variable would need to play in order to be admissible for a locked claim. Certification concerns the evidence that the role is credible in a particular data environment. This paper studies generated covariates one role at a time; the broader problem of certifying many generated variables as a reconstructed pre-action information set is left to separate work.

The main contributions are as follows.
\begin{enumerate}
\item It defines generated covariates as outputs of versioned, possibly stochastic representation maps from longitudinal event streams and assigns causal types according to temporal measurability, intervention status, design role, and downstream role.
\item It defines an estimand lock: a target causal contrast, target state distribution, outcome time, intercurrent-event strategy, missingness strategy, and claim status fixed before generated covariates are used for confirmatory inference.
\item It states admissibility rules for treatment contrasts, state-design objects, observation strategies, intercurrent-event handling, claim status, and versioning, giving causal typing operational consequences.
\item It proves an estimand-preservation theorem: under standard causal identification assumptions, analyses satisfying the admissibility rules identify the locked standardized proximal effect.
\item It applies established deconfounding-score identities \cite{damour2021,clivio2026} to representation substitution in sequential experiments. The conditional form records which design information a generated covariate may discard; a standardized decomposition separates compression, conditional-law, and standardization drift. The contribution is the integration of these results with causal types, an estimand lock, and audit rules.
\item It derives an orthogonal influence-function estimator for the locked effect under clustered repeated-session data and missing outcomes, and gives a simulation in which performance is organized around type violations rather than around disease-specific efficacy.
\end{enumerate}

The framework complements potential-outcome causal inference, longitudinal g-methods, structural nested models, micro-randomized trials, dynamic treatment regimes, mediation analysis, double/debiased machine learning, targeted learning, post-selection inference, and causal representation learning \cite{hernan2020,pearl2009,robins1986,robins2000,vansteelandt2014,klasnja2015,boruvka2018,qian2022,murphy2003,vanderweele2015,chernozhukov2018,vdl2006,kuchibhotla2022}. Its distinct question comes before estimation: which generated covariates are admissible for the causal claim being made? A brief secondary use is developed later: type-violation reasons can be retained as exploratory follow-up signals, without converting the framework into a causal discovery algorithm.

\section{Sequential experiment and generated representations}

Let \(i=1,\ldots,N\) index independent participants or clusters. Participant \(i\) contributes sessions \(s=1,\ldots,S_i\). We suppress the proximal outcome time \(r\) when it is not needed; multiple outcome times are handled by fixed weights later.

Each session has a pre-action information set \(\mathcal F^{-}_{is}\), an action time \(t^A_{is}\), a current action \(A_{is}\in\{0,1\}\), a post-action information set, an observation indicator \(R_{is}\), and an outcome \(Y_{is}\) when \(R_{is}=1\). The pre-action information set may contain baseline variables, prior outcomes, prior treatments, prior intercurrent events, design variables, and raw event streams recorded before the current action. The action \(A_{is}\) is a treatment version, not merely a label. For example, \(A=1\) might denote a standardized high-expectation explanation and \(A=0\) a neutral explanation, with contact duration and measurement schedule held fixed.

Let \(Y_{is}(a)\) denote the potential proximal outcome under current-session treatment version \(a\), with the past history up to \(\mathcal F^{-}_{is}\) left as observed. Carryover from earlier sessions is not assumed absent; it must be represented in the pre-action history or handled through the intercurrent-event strategy.

\begin{definition}[Generated representation]
A generated representation at session \((i,s)\) is a vector
\[
Z^{(m)}_{is}=\calR_m(E_{is},\xi^{(m)}_{is}),
\]
where \(E_{is}\) is a longitudinal event stream, \(\calR_m\) is a versioned representation map, and \(\xi^{(m)}_{is}\) denotes possible algorithmic randomness. The map \(\calR_m\) may be a manual coding rule, a conventional feature extractor, an AI system, or a human-AI pipeline. Its version includes the model, extraction instructions or dictionary, input window, feature schema, preprocessing rules, quality-control rules, and date or hash.
\end{definition}

A representation is pre-action if it is measurable with respect to \(\mathcal F^{-}_{is}\), up to external algorithmic randomness independent of current potential outcomes conditional on the raw pre-action stream. A representation is post-action if its input window or construction uses information after \(A_{is}\) is assigned or delivered. A representation is mixed-window if it combines pre-action and post-action inputs without a separable audit trail. For primary total-effect analysis, a mixed-window representation is treated as post-action unless the pre-action component can be reconstructed and versioned separately.

In practice there may be more than one plausible action milestone: recommendation formation, assignment recording, patient communication, and intervention delivery can occur at different times. The estimand lock must name the milestone relevant to the causal contrast. A covariate whose source window overlaps competing milestones is treated as mixed-window for the primary lock unless a separable pre-milestone component can be reconstructed. When the milestone itself is scientifically ambiguous, analysts should report separate locks or a source-window sensitivity analysis rather than silently choosing the most convenient boundary.

All identification and drift statements below are conditional on a fixed recorded version of the representation map. If \(\calR_m\) uses algorithmic randomness, the analysis must either record the seed, condition on the generated output as produced by the locked version, or declare an external seed distribution before outcome analysis. Equivalently, the random seed can be included in the pre-action object whenever it is independent of current potential outcomes conditional on the raw pre-action stream. If neither the seed nor the stochastic generation rule is reproducible, the versioning rule is violated.

Temporal order is necessary but not sufficient. A pre-action variable may be a randomized treatment version, design variable, core state, candidate state marker, or prior history. A post-action variable may be a mediator, outcome proxy, intercurrent event, or measurement descriptor. The type discipline below records these distinctions.

\section{Causal types, claim status, and estimand locks}

\begin{definition}[Causal type discipline]
A causal type discipline for generated covariates is
\[
\calG=(\calR_m,\calC,\calS,\calL),
\]
where \(\calR_m\) is a versioned representation map, \(\calC\) is a causal type classifier, \(\calS\) is a claim-status filter, and \(\calL\) is an estimand lock.
\end{definition}

For a generated covariate \(Z_{isj}\), the classifier assigns
\[
\calC(Z_{isj})\ \text{belongs to}\ \{A,B,U,H,D,M,Y,\mathrm{Obs},I,\mathrm{Excluded}\}.
\]
The roles are defined in Table \ref{tab:types}.

\begin{longtable}{>{\raggedright\arraybackslash}p{0.13\linewidth}>{\raggedright\arraybackslash}p{0.78\linewidth}}
\caption{Causal types for generated covariates.}\label{tab:types}\\
\toprule
Type & Meaning \\
\midrule
\endfirsthead
\toprule
Type & Meaning \\
\midrule
\endhead
\(A\) & Current-session treatment version assigned or otherwise defined as an intervention. \\
\(B\) & Core pre-action state required by the scientific question or design. \\
\(U\) & Candidate pre-action state marker or effect modifier generated from event streams. \\
\(H\) & Prior history, including previous treatments, responses, adverse events, and timing. \\
\(D\) & Design information, such as strata, allocation probability, eligibility, availability, block, site, or adaptive randomization state. \\
\(M\) & Post-action mediator candidate or mechanism variable. \\
\(Y\) & Outcome, outcome proxy, or post-outcome descriptor. \\
Obs & Observation, missingness, censoring, or measurement indicator. \\
\(I\) & Intercurrent event requiring a strategy for the estimand. \\
Excluded & Excluded covariate, non-causal descriptor, leakage variable, or covariate outside the analysis scope. \\
\bottomrule
\end{longtable}

The claim-status filter assigns
\[
\calS(Z_{isj})\in\{\mathrm{confirmatory},\mathrm{secondary},\mathrm{exploratory},\mathrm{next\ trial},\mathrm{not\ causal}\}.
\]
A covariate of type \(U\) can support pre-specified or secondary effect-modification analysis. It does not by itself support a claim that intervening on \(U\) would change \(Y\). A covariate of type \(M\) can support mechanism analysis, but it is not an admissible adjustment variable for the primary total effect. A covariate discovered or selected using outcome information is exploratory unless the selection step is separated from or incorporated into inference.

\begin{definition}[Estimand lock]
An estimand lock is a tuple
\[
\calL=(a,a',r,V,Q^*,\omega,\mathcal I,\mathcal R_{obs},\kappa),
\]
where \(a\) and \(a'\) are treatment versions, \(r\) is the proximal outcome time or a set of times, \(V\) is the admissible pre-action state-design object, \(Q^*\) is the target distribution of \(V\), \(\omega\) gives outcome-time weights when needed, \(\mathcal I\) specifies the intercurrent-event strategy, \(\mathcal R_{obs}\) specifies the missingness or observation strategy, and \(\kappa\) is the permitted claim status.
\end{definition}

In the simplest single-time case, with \(V_{is}=(X_{is},D_{is})\) containing pre-action state and design information, the locked standardized proximal effect is
\begin{equation}
\Psi(a,a';Q^*)=\int\{m_a(v)-m_{a'}(v)\}\,dQ^*(v),
\label{eq:psi}
\end{equation}
where
\[
m_a(v)=\E\{Y_{is}(a)\mid V_{is}=v\}.
\]
For multiple proximal outcome times \(r\in\mathcal R\), define
\[
\Psi_\omega(a,a';Q^*)=\sum_{r\in\mathcal R}\omega_r\int\{m_{ar}(v)-m_{a'r}(v)\}\,dQ^*(v),\qquad \sum_r\omega_r=1.
\]
The target distribution \(Q^*\) must be specified before outcome modeling. It may be participant-weighted, session-weighted, restricted to an eligibility region, or transported to an external state distribution. If session counts vary across participants, participant-weighted and session-weighted estimands are different. This distinction is part of the lock, not a detail to be decided after results are seen. If \(Q^*\) is external to the study, support and transportability conditions must be stated; otherwise the lock defines an estimand that the observed study cannot identify without additional assumptions.

\section{Admissibility rules for generated covariates}

An analysis must specify the treatment contrast, the state-design object used for adjustment or standardization, the target distribution, the observation strategy, the intercurrent-event strategy, the nuisance functions, and the status of the claim being made. We call such an analysis admissible for an estimand lock when the variables placed in these positions satisfy the role-compatibility rules in Table \ref{tab:rules}. These rules state what follows from a proposed role assignment; they do not certify that the assignment is correct or that the classifier \(\calC\) can be learned automatically from the data. When evidence for a role is weak, the covariate should be excluded from the confirmatory state-design object, assigned a conservative downstream role, or analyzed under an alternative estimand lock.

\begin{longtable}{>{\raggedright\arraybackslash}p{0.18\linewidth}>{\raggedright\arraybackslash}p{0.35\linewidth}>{\raggedright\arraybackslash}p{0.37\linewidth}}
\caption{Core admissibility rules for generated covariates.}\label{tab:rules}\\
\toprule
Rule & Required type condition & Consequence for the locked analysis \\
\midrule
\endfirsthead
\toprule
Rule & Required type condition & Consequence for the locked analysis \\
\midrule
\endhead
Treatment & Treatment positions contain only variables of type \(A\) with operational treatment versions. & The contrast is about changing the declared treatment version, not an observed descriptor. \\
Pre-action state & The primary state-design object \(V\) contains only pre-action variables of type \(B,U,H,D\). & Current-session descendants of \(A\) cannot define the state for a primary total effect. \\
Design retention & Variables of type \(D\) that affect assignment, eligibility, availability, observation, or target construction are retained in \(V\) or represented in a known design mechanism. & Randomization, positivity, and observation arguments remain attached to the variables that generated them. \\
Total effect & Variables of type \(M\) are excluded from the primary total-effect adjustment set. & Mechanism variables require a separate mediation lock. \\
Outcome and observation & Variables of type \(Y\), post-action Obs, and post-action \(I\) do not define pre-action state. & They enter only through the outcome, observation model, or intercurrent-event strategy. \\
Claim status & Outcome-guided covariates cannot support confirmatory effect-modification claims without pre-specification, sample splitting, selective inference, or simultaneous inference. & Exploratory discovery is not reported as confirmatory heterogeneity. \\
Versioning & A covariate may enter confirmatory analysis only through the version of \(\calR_m\) recorded in the lock or audit object. & Post-hoc extraction-instruction changes, dictionary edits, model updates, or unreproducible stochastic extraction create a new representation version. \\
\bottomrule
\end{longtable}

These rules are intentionally restrictive. Their purpose is not to maximize prediction. Their purpose is to preserve the causal quantity in Equation \eqref{eq:psi}. Table \ref{tab:violations} summarizes common violations and the resulting drift.

\begin{longtable}{>{\raggedright\arraybackslash}p{0.24\linewidth}>{\raggedright\arraybackslash}p{0.34\linewidth}>{\raggedright\arraybackslash}p{0.32\linewidth}}
\caption{Common type violations and the causal quantity they tend to change.}\label{tab:violations}\\
\toprule
Violation & What is wrong & Typical consequence \\
\midrule
\endfirsthead
\toprule
Violation & What is wrong & Typical consequence \\
\midrule
\endhead
Mediator used as state & A post-action mechanism variable is adjusted for in a primary total-effect model. & The target moves from a total effect to a direct, path-blocked, or otherwise conditioned contrast. \\
State marker reported as intervention & Heterogeneity across observed states is interpreted as the effect of manipulating the state. & A next-trial hypothesis is reported as a confirmatory intervention claim. \\
Design erasure & A predictive representation replaces design information used for assignment, eligibility, availability, or observation. & Exchangeability or positivity may fail after compression. \\
Post-action leakage & Generated covariates from post-treatment text, sensors, or notes are inserted into the pre-action state. & The comparison conditions on treatment-specific strata. \\
Outcome-guided discovery & The same outcomes are used to select and test an effect modifier. & Nominal intervals and tests no longer support confirmatory claims. \\
\bottomrule
\end{longtable}

\section{Identification and estimand preservation}

The assumptions below separate statistical identification from audit correctness. The type discipline does not make the classifier \(\calC\) correct by definition; it makes the required judgement explicit and records the estimand consequences of that judgement.

\begin{assumption}[Audit correctness and version stability]
For the locked claim, the audit correctly records the source window, representation version, causal type, and claim status of each generated covariate admitted to the analysis. Variables admitted to the primary state-design object are pre-action for the current treatment version, design variables required for assignment, eligibility, availability, observation, or target construction are not erased without a declared compression argument, and the representation version is reproducible under the lock.
\end{assumption}

\begin{assumption}[Consistency and treatment versions]
If \(A_{is}=a\), then \(Y_{is}=Y_{is}(a)\) whenever the outcome is observed. The treatment versions \(a\) and \(a'\) are operationally defined.
\end{assumption}

\begin{assumption}[Sequential exchangeability]
For each session in the target region, \(A_{is}\) is independent of \(\{Y_{is}(1),Y_{is}(0)\}\) given \(V_{is}\). This may be supported by randomization conditional on \(V_{is}\), by design-based assignment probabilities, or by substantive assumptions in observational sequential studies.
\end{assumption}

\begin{assumption}[Positivity]
For all \(v\) in the support of \(Q^*\), \(0<P(A_{is}=1\mid V_{is}=v)<1\).
\end{assumption}

\begin{assumption}[History and interference]
Between-participant interference is absent or negligible. Within-participant carryover is represented in \(V_{is}\) through prior history \(H_{is}\) or through a declared intercurrent-event strategy.
\end{assumption}

\begin{assumption}[Observation process]
When outcomes may be missing, the observation indicator satisfies \(R_{is}\) independent of \(\{Y_{is}(1),Y_{is}(0)\}\) given \(A_{is}\) and \(V_{is}\), with \(P(R_{is}=1\mid A_{is}=a,V_{is}=v)>0\) in the target region. If this condition is not credible, sensitivity parameters rather than complete-case claims are required.
\end{assumption}

\noindent\textbf{Observation-sensitivity note.} Assumption 6 is an identifying condition, not a consequence of causal typing. If outcome observation is not conditionally ignorable, define \(\rho_a(v)=P(R=1\mid A=a,V=v)\) and let \(\delta_a(v)\) be the mean difference between unobserved and observed potential outcomes in arm \(a\) at \(v\). Then, under consistency and treatment exchangeability given \(V\),
\[
m_a(v)=\mu_a(v)+\{1-\rho_a(v)\}\,\delta_a(v).
\]
Thus failure of observation exchangeability moves the analysis from a point claim to a sensitivity family rather than changing the causal type discipline.

\begin{theorem}[Estimand preservation under admissible causal-role assignments]
Consider a locked estimand \(\calL\) with target effect \(\Psi(a,a';Q^*)\) in Equation \eqref{eq:psi}. Suppose the variables used for the treatment contrast, state-design object, observation strategy, intercurrent-event strategy, nuisance functions, and claim status satisfy the admissibility rules in Table \ref{tab:rules}, and Assumptions 1-6 hold. Then the observed-data functional
\begin{equation}
\Psi^{\id}(a,a';Q^*)=\int\{\mu_a(v)-\mu_{a'}(v)\}\,dQ^*(v),
\label{eq:identifying}
\end{equation}
where
\[
\mu_a(v)=\E(Y_{is}\mid A_{is}=a,R_{is}=1,V_{is}=v),
\]
identifies the locked standardized proximal effect. In the absence of missing outcomes, the same result holds with \(R\) omitted.
\end{theorem}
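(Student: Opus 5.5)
The plan is the standard g-formula argument, carried out pointwise in \(v\) and then integrated against the locked \(Q^*\). The admissibility rules in Table \ref{tab:rules} and Assumption 1 do the bookkeeping: they guarantee that the object \(V\) appearing in the assumptions is the same object that defines \(m_a\) in Equation \eqref{eq:psi}. First, I will use the Pre-action state, Total effect, Outcome and observation, and Design retention rules, together with audit correctness. These ensure that \(V_{is}\) is measurable with respect to \(\mathcal F^{-}_{is}\), up to version-fixed external randomness that is independent of potential outcomes. They also ensure that \(V_{is}\) contains no current-session descendant of \(A_{is}\) and retains the design variables that drive assignment and observation. Consequently the conditioning event \(\{V_{is}=v\}\) is not altered by intervening on \(A_{is}\), so \(m_a(v)=\E\{Y_{is}(a)\mid V_{is}=v\}\) is a well-defined pre-treatment conditional mean. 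Assumption 5 ensures that \(Y_{is}(a)\) is defined with the past left as observed, so carryover is absorbed into \(V\). The Versioning rule fixes \(\calR_m\), so \(V\) is a fixed measurable function of the data rather than a moving target.

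Second, for each \(v\) in the support of \(Q^*\), I will establish the chain
\begin{align*}
m_a(v)&=\E\{Y_{is}(a)\mid A_{is}=a,V_{is}=v\}\\
&=\E\{Y_{is}(a)\mid A_{is}=a,R_{is}=1,V_{is}=v\}\\
&=\E(Y_{is}\mid A_{is}=a,R_{is}=1,V_{is}=v)=\mu_a(v).
\end{align*}
The first equality uses Assumption 3. The second uses Assumption 6, reading \(R\indep Y(a)\mid A,V\). The third uses consistency (Assumption 2). Positivity (Assumption 4) together with \(P(R=1\mid A=a,V=v)>0\) makes each conditional expectation well defined on the support of \(Q^*\).

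One subtlety in the first two equalities needs care. Assumption 3 is stated for the joint vector \(\{Y(1),Y(0)\}\), and only mean independence is needed. Combining the two steps, however, requires \(R\indep Y(a)\mid A=a,V\) inside the arm, which is exactly what Assumption 6 gives. I will therefore apply the two assumptions sequentially rather than jointly.

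Third, I integrate the pointwise identity against \(Q^*\) to obtain \(\Psi(a,a';Q^*)=\Psi^{\id}(a,a';Q^*)\). This is where the lock matters. \(Q^*\) is fixed before outcome modeling, and it is a distribution on the same admissible \(V\). Its support lies in the region where Assumptions 3, 4 and 6 hold, and it is either internal to the study or carries a stated support condition. The Claim-status rule ensures that no outcome-selected coordinate has been added to \(V\) after the fact, which would change \(V\), and hence \(Q^*\), in a data-dependent way. For the case without missing outcomes, I set \(R\equiv1\) and drop the second equality.

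I expect the main obstacle to be the first step rather than the algebra. The difficulty is making precise that a generated \(V\) with algorithmic randomness \(\xi\) still supports Assumption 3 and that \(m_a\) is unchanged by intervention on \(A\). To handle this, I will condition on the recorded seed or locked output and use the stated independence of \(\xi\) from current potential outcomes given the raw pre-action stream. This treats \(\xi\) as part of the pre-action object, so that \(V\) is a deterministic function of pre-action quantities. The intercurrent-event strategy \(\mathcal I\) I will handle by noting that, under the lock, it is absorbed into the definition of \(Y_{is}(a)\) or into \(R\), so no further step is needed.
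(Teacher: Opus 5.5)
Your proposal is correct and follows essentially the same route as the paper. You establish the pointwise identity \(m_a(v)=\mu_a(v)\) from sequential exchangeability, the observation-process assumption, and consistency, with positivity guaranteeing that the conditional means are defined, and then integrate against \(Q^*\); the paper applies the same three steps in the reverse order, starting from consistency, and your added bookkeeping on versioning, seeds, and intercurrent events does not change the argument.
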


\begin{proof}
Because the admissibility rules hold, \(V_{is}\) is pre-action and contains the state and design information required by the lock; no current-session mediator, outcome proxy, or post-action descriptor is included as if it were pre-action state. By consistency, among observed outcomes with \(A_{is}=a\), \(Y_{is}=Y_{is}(a)\). By the observation-process assumption, conditioning on \(R_{is}=1\) does not change the conditional mean of \(Y_{is}(a)\) given \(A_{is}=a,V_{is}=v\). By sequential exchangeability,
\[
\E\{Y_{is}(a)\mid A_{is}=a,V_{is}=v\}=\E\{Y_{is}(a)\mid V_{is}=v\}.
\]
Positivity ensures the conditional means are defined on the support of \(Q^*\). Therefore \(\mu_a(v)=m_a(v)\) and \(\mu_{a'}(v)=m_{a'}(v)\) for \(Q^*\)-almost every \(v\). Substituting these equalities into Equation \eqref{eq:identifying} gives Equation \eqref{eq:psi}.
\end{proof}

Once generated covariates are assigned roles compatible with the lock, standard causal identification machinery applies. The added requirement is the causal-role discipline that prevents generated representations from silently changing the target.

\section{Estimand drift under type violations}

Type preservation is useful only if violations have consequences. This section records generic ways in which generated covariates can change the estimand or the inferential claim. The central violation is substituting a predictive representation for design-relevant causal information.

\subsection{No-free substitution and compression drift}

Let \(W\) be an admissible pre-action state-design object satisfying exchangeability. Let \(Z=T(W)\) be a generated representation that an analyst uses instead of \(W\). If the representation map is stochastic, take \(W\) in this subsection to mean the augmented object \((W,\xi)\) after the seed or stochastic generation rule has been fixed by the lock. Define
\[
\pi_a(W)=P(A=a\mid W),\qquad m_a(W)=\E\{Y(a)\mid W\}.
\]

The conditional-covariance mechanism below is established in the deconfounding-score literature. D'Amour and Franks \cite{damour2021}, Propositions 1--2 and Appendix B, characterize reduction bias using outcome and assignment covariances; Clivio et al. \cite{clivio2026}, Lemma 3.1, give a density-ratio formulation for the average treatment effect on the treated. We state the arm-specific conditional identity in the present notation to connect that mechanism to sequential design information and the estimand lock. The covariance identity and its zero-covariance criterion are not claimed as new general compression theory.

\begin{theorem}[No-free-substitution: conditional compression drift]
Assume consistency and \(A\indep Y(a)\mid W\). For any representation \(Z=T(W)\) with \(\E\{\pi_a(W)\mid Z=z\}>0\),
\begin{equation}
\E(Y\mid A=a,Z=z)-\E\{Y(a)\mid Z=z\}
=
\frac{\operatorname{Cov}\{m_a(W),\pi_a(W)\mid Z=z\}}{\E\{\pi_a(W)\mid Z=z\}}.
\label{eq:drift}
\end{equation}
Therefore conditioning on \(Z\) instead of \(W\) generally changes the conditional causal mean. For a treatment contrast \(a\) versus \(a'\), the conditional drift is the difference of the two arm-specific covariance terms.
\end{theorem}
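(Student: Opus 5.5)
The plan is to compute the observed conditional mean $\E(Y\mid A=a,Z=z)$ directly with the tower property through $W$. Because $Z=T(W)$ is a measurable function of $W$, conditioning on $Z$ is coarser than conditioning on $W$. This lets us write every conditional expectation given $(A,Z)$ as an expectation over the law of $W$ given $Z$, reweighted by the assignment probability.

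\textbf{Step 1 (numerator).} By consistency, $YI(A=a)=Y(a)I(A=a)$. The tower property through $W$ then gives
\[
\E\{Y I(A=a)\mid Z=z\}=\E\bigl[\E\{Y(a)I(A=a)\mid W\}\mid Z=z\bigr].
\]
Exchangeability $A\indep Y(a)\mid W$ factors the inner expectation as $m_a(W)\pi_a(W)$. Here we use that $Z$ is $W$-measurable, so conditioning on $(W,Z)$ is the same as conditioning on $W$.

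\textbf{Step 2 (denominator and ratio).} In the same way, $P(A=a\mid Z=z)=\E\{\pi_a(W)\mid Z=z\}$, which is positive by hypothesis. Hence
\[
\E(Y\mid A=a,Z=z)=\frac{\E\{m_a(W)\pi_a(W)\mid Z=z\}}{\E\{\pi_a(W)\mid Z=z\}}.
\]
Likewise, $\E\{Y(a)\mid Z=z\}=\E\{m_a(W)\mid Z=z\}$ by the tower property.

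\textbf{Step 3 (covariance identity).} Write the product expectation as
\[
\E(m_a\pi_a\mid Z=z)=\operatorname{Cov}(m_a,\pi_a\mid Z=z)+\E(m_a\mid Z=z)\,\E(\pi_a\mid Z=z).
\]
Dividing by $\E(\pi_a\mid Z=z)$ and subtracting $\E\{Y(a)\mid Z=z\}$ leaves exactly the right side of Equation \eqref{eq:drift}. The contrast statement then follows by applying the identity to $a$ and to $a'$ and taking the difference.

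\textbf{Main obstacle.} The delicate part is measure-theoretic rather than algebraic. We need the conditional expectations given $Z=z$ to be well defined and finite, which requires an integrability condition such as $\E|Y(a)|<\infty$. We also need the identity to hold $z$-almost everywhere on the set where $\E\{\pi_a(W)\mid Z=z\}>0$.

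When $\calR_m$ is stochastic, we must check that $Z$ is a function of the augmented object $(W,\xi)$. We then need $A\indep Y(a)\mid(W,\xi)$, which follows if $\xi$ is independent of $(A,Y(a))$ given $W$, as the lock's seed rule stipulates. I would state these regularity conditions explicitly and then cite D'Amour and Franks for the equivalent form.
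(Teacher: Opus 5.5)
Your proposal is correct and follows essentially the same route as the paper: the paper first writes $\E(Y\mid A=a,Z=z)=\E\{m_a(W)\mid A=a,Z=z\}$ and then Bayes-weights within levels of $Z$ to obtain $\E\{m_a(W)\pi_a(W)\mid Z=z\}/\E\{\pi_a(W)\mid Z=z\}$, which is your numerator/denominator computation via the tower property, and it finishes with the same covariance rearrangement. Your regularity remarks (integrability, $z$-almost-everywhere statements, the stochastic-map case) are refinements rather than departures; the paper handles the stochastic case simply by taking $W$ to be the augmented object $(W,\xi)$ at the start of the subsection.
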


\begin{proof}
By consistency and exchangeability given \(W\),
\[
\E(Y\mid A=a,Z=z)=\E\{m_a(W)\mid A=a,Z=z\}.
\]
Bayes weighting within levels of \(Z\) gives
\[
\E\{m_a(W)\mid A=a,Z=z\}=
\frac{\E\{m_a(W)\pi_a(W)\mid Z=z\}}{\E\{\pi_a(W)\mid Z=z\}}.
\]
Subtracting \(\E\{m_a(W)\mid Z=z\}\) yields Equation \eqref{eq:drift}.
\end{proof}

The identity shows that a generated representation \(Z\) may predict \(Y\) well and still fail to preserve the causal target if omitted information affects both assignment and potential-outcome means within levels of \(Z\).

\begin{corollary}[Safe compression]
Under the conditions of Theorem 2, substituting \(Z=T(W)\) for \(W\) preserves the conditional mean for arm \(a\) at \(Z=z\) if and only if
\[
\operatorname{Cov}\{m_a(W),\pi_a(W)\mid Z=z\}=0.
\]
Two interpretable sufficient conditions are: assignment compression, \(\pi_a(W)=\pi_a(Z)\) almost surely within \(Z=z\); and outcome compression, \(m_a(W)=m_a(Z)\) almost surely within \(Z=z\). For a treatment contrast, preservation holds exactly when the difference of the two arm-specific drift terms is zero. Zero drift in each arm is sufficient; cancellation can also preserve that particular contrast.
\end{corollary}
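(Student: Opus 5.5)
The corollary follows almost directly from Theorem 2, so the plan is to read each claim off Equation \eqref{eq:drift}. The left side of \eqref{eq:drift} is the arm-$a$ drift at $Z=z$. The denominator $\E\{\pi_a(W)\mid Z=z\}$ is strictly positive by hypothesis. Hence the drift vanishes if and only if the numerator $\operatorname{Cov}\{m_a(W),\pi_a(W)\mid Z=z\}$ vanishes. This gives the if-and-only-if characterization, where ``preserves the conditional mean'' means $\E(Y\mid A=a,Z=z)=\E\{Y(a)\mid Z=z\}$.

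For the two sufficient conditions, I would use the fact that a conditional covariance is zero whenever one of its arguments is almost surely constant given the conditioning event.

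Under assignment compression, $\pi_a(W)=\pi_a(Z)$ almost surely within $Z=z$. Here $\pi_a(Z)$ is read as $P(A=a\mid Z)$, and since $Z=T(W)$ it is $\sigma(Z)$-measurable. So $\pi_a(W)$ equals the constant $\pi_a(z)$ on the event $\{Z=z\}$, and the covariance is zero.

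Under outcome compression, $m_a(W)=m_a(Z)=\E\{Y(a)\mid Z\}$ is likewise constant given $Z=z$, and the covariance again vanishes.

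The one point to handle carefully is interpretation. When $Z$ is continuous, ``within $Z=z$'' should mean equality almost surely under the regular conditional law of $W$ given $Z=z$. With that reading, the covariance is computed as $\E\{(m_a(W)-c)\pi_a(W)\mid Z=z\}=0$ when $m_a(W)$ equals a constant $c$, and symmetrically when $\pi_a(W)$ is the constant. For stochastic maps, as the paper already notes, $W$ denotes the augmented object $(W,\xi)$, so $T$ is deterministic and these measurability statements hold.

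For the contrast statement, I would apply Theorem 2 to arms $a$ and $a'$ separately. Subtracting the two identities gives
\[
\{\E(Y\mid A=a,Z=z)-\E(Y\mid A=a',Z=z)\}-\E\{Y(a)-Y(a')\mid Z=z\}
=\Delta_a(z)-\Delta_{a'}(z),
\]
where $\Delta_a(z)$ denotes the right side of \eqref{eq:drift}. This step requires the positivity hypothesis to hold for both arms. The identity shows that the contrast is preserved exactly when $\Delta_a(z)=\Delta_{a'}(z)$. Zero drift in each arm is therefore sufficient.

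To show that cancellation can occur, I would give a minimal example. Take binary $W$ with $Z$ constant, $\pi_1(W)=W$-dependent, and $m_0(W)=m_1(W)+c$. Then $\operatorname{Cov}(m_1,\pi_1)=-\operatorname{Cov}(m_0,\pi_0)$ because $\pi_0=1-\pi_1$. The denominators are chosen equal by setting $\E\pi_1=1/2$. In this example the two arm drifts are nonzero but equal, so they cancel in the contrast.

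The main obstacle is only this existence example and the measure-theoretic reading of ``within $Z=z$''. The core equivalence itself is immediate from Theorem 2.
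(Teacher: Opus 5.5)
Your treatment of the if-and-only-if claim, the two sufficient conditions, and the contrast identity is correct. It matches how the paper handles this corollary, which it reads off Equation~\eqref{eq:drift} without a separate proof.

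The gap is in the one step you add beyond that. Your cancellation example shows the opposite phenomenon. From \(\pi_0=1-\pi_1\) and \(m_0=m_1+c\) you correctly get \(\operatorname{Cov}(m_0,\pi_0)=-\operatorname{Cov}(m_1,\pi_1)\). With \(\E\pi_1=\E\pi_0=1/2\), the arm drifts are then \(\Delta_1=2\operatorname{Cov}(m_1,\pi_1)\) and \(\Delta_0=-2\operatorname{Cov}(m_1,\pi_1)\). These are negatives of each other, not equal. The contrast drift is the difference \(\Delta_1-\Delta_0=4\operatorname{Cov}(m_1,\pi_1)\), so it is nonzero whenever the arm drifts are. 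Your example is ordinary confounding with a constant treatment effect, the textbook case where nothing cancels. You conflated ``the covariances sum to zero'' with ``the drifts are equal''; because the contrast drift is a difference, you need the latter.

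The repair is to reverse the outcome dependence across arms. Take \(m_0(W)=c-m_1(W)\), so that \(m_1+m_0\) is constant, and keep \(\E\pi_1=1/2\). Then \(\operatorname{Cov}(m_0,\pi_0)=\operatorname{Cov}(c-m_1,1-\pi_1)=\operatorname{Cov}(m_1,\pi_1)\). Both drifts equal \(2\operatorname{Cov}(m_1,\pi_1)\), which is nonzero but cancels in the contrast.

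Concretely, take \(Z\) constant, \(W\sim\mathrm{Bernoulli}(1/2)\), \(\pi_1(W)=1/4+W/2\), \(m_1(W)=W\), and \(m_0(W)=-W\). Then \(\E(Y\mid A=1)=3/4\) versus \(\E Y(1)=1/2\), and \(\E(Y\mid A=0)=-1/4\) versus \(\E Y(0)=-1/2\). Each arm drifts by \(1/4\), yet the contrast value \(1\) is preserved.

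More generally, with \(Z\) constant and \(p=\E\pi_1\), the contrast drift is \(\operatorname{Cov}\{m_1(W)/p+m_0(W)/(1-p),\pi_1(W)\}\). This shows directly which configurations cancel, and that your choice \(m_0=m_1+c\) is exactly one that does not.
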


\begin{corollary}[Predictive sufficiency is not causal sufficiency]
Suppose \(Z=T(W)\) and that \(\operatorname{Var}\{m_a(W)\mid Z\}>0\) with positive probability. Holding fixed the joint distribution of \((W,Y(a))\) and the representation map \(T\), one can construct two treatment-assignment mechanisms with the same potential-outcome prediction problem from \(Z\): one with zero compression drift everywhere and one with strictly positive compression drift on a set of representation values of positive probability. Thus no criterion based only on prediction risk from \(Z\) can certify that \(Z\) is a valid substitute for \(W\) in the causal analysis.
\end{corollary}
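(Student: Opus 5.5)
The plan is to build the two assignment mechanisms explicitly and then read off their drift from the identity in Theorem 2. The starting point is that prediction of \(Y(a)\) from \(Z\) depends only on the joint law of \((Z,Y(a))\). Since \(Z=T(W)\), this law is determined by the law of \((W,Y(a))\) together with \(T\), and neither of these involves the assignment mechanism. So any two mechanisms \(\pi^{(0)}_a\) and \(\pi^{(1)}_a\) that keep \(A\indep Y(a)\mid W\) automatically give the same potential-outcome prediction problem from \(Z\), with the same Bayes predictor \(\E\{Y(a)\mid Z\}\) and the same prediction risk. If one also wants the observed-arm regression \(\E(Y\mid A=a,Z)\) to be the object predicted, the statement should be read as fixing the potential-outcome law, which is the reading I will adopt.

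\textbf{Step 1 (zero-drift mechanism).} Take \(\pi^{(0)}_a(W)\equiv c\) for a constant \(c\in(0,1)\), for instance randomization independent of \(W\). Exchangeability given \(W\) then holds trivially, and \(\E\{\pi_a^{(0)}(W)\mid Z\}=c>0\). The covariance in Equation \eqref{eq:drift} vanishes for every \(z\), which is the assignment-compression sufficient condition of Corollary 1.

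\textbf{Step 2 (positive-drift mechanism).} Let \(g(W)=m_a(W)-\E\{m_a(W)\mid Z\}\), and set
\[
\pi^{(1)}_a(W)=\tfrac12+\epsilon\,h(g(W)),
\]
where \(h\) is a bounded strictly increasing function with values in \((-1,1)\), such as \(\tanh\), and \(\epsilon\in(0,1/2)\). Because \(\pi^{(1)}_a\) is a function of \(W\) alone and lies in \((0,1)\), exchangeability given \(W\) and positivity both hold, and \(\E\{\pi_a^{(1)}(W)\mid Z\}\geq 1/2-\epsilon>0\).

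The conditional covariance is
\[
\operatorname{Cov}\{m_a(W),\pi^{(1)}_a(W)\mid Z\}=\epsilon\,\operatorname{Cov}\{g(W),h(g(W))\mid Z\}.
\]
The main point to verify is that the covariance of a random variable with a strictly increasing transform of itself is nonnegative, and is strictly positive exactly when that variable is nondegenerate. This follows from Chebyshev's association inequality, or directly from the symmetrization
\[
\operatorname{Cov}(X,h(X))=\tfrac12\,\E\big[(X-X')\{h(X)-h(X')\}\big],
\]
where \(X'\) is an independent copy of \(X\). The integrand is nonnegative, and it is positive whenever \(X\neq X'\), which has positive probability exactly when \(X\) is nondegenerate. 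The conditional version needs a regular conditional distribution given \(Z\), which exists in standard Borel spaces.

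On the event \(\{\operatorname{Var}(m_a(W)\mid Z)>0\}\), which has positive probability by hypothesis, \(g(W)\) is nondegenerate given \(Z\), so the covariance is strictly positive there. Dividing by the positive denominator and applying Theorem 2 gives strictly positive drift on a set of representation values of positive probability.

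\textbf{Step 3 (conclusion).} Both mechanisms share the law of \((Z,Y(a))\). Any criterion that is a function only of prediction risk from \(Z\) therefore gives the same verdict in both worlds, yet \(Z\) is a valid substitute for \(W\) in the first world and not in the second. Hence no such criterion can certify causal sufficiency.

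The main obstacle is the measure-theoretic care in Step 2, namely making the conditional strict-positivity argument rigorous. I would handle it with the symmetrization identity applied under the conditional law given \(Z=z\).
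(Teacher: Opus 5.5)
Your proposal is correct and follows the paper's proof essentially step for step. The paper uses the same two mechanisms: a constant assignment probability ($\tfrac12$) for zero drift, and $\pi^{(1)}_a(W)=\tfrac12+\epsilon\tanh[m_a(W)-\E\{m_a(W)\mid Z\}]$, which is your construction with $h=\tanh$, for strictly positive drift via the covariance of a nondegenerate variable with a strictly increasing function of itself. Your symmetrization identity, and your remark that removing the $Z$-measurable centering leaves the conditional covariance unchanged, make explicit steps the paper only asserts.
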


\begin{proof}
The first mechanism sets \(\pi^{(0)}_a(W)=\tfrac12\), so the covariance in Equation \eqref{eq:drift} is zero for every \(z\). For the second, fix \(\epsilon\in(0,\tfrac12)\) and set
\[
\pi^{(1)}_a(W)=\tfrac12+\epsilon\tanh[m_a(W)-\E\{m_a(W)\mid Z\}].
\]
Because \(|\tanh|<1\), the assignment probability lies in \((0,1)\) without requiring \(m_a(W)\) to be bounded. For every \(z\), the conditional covariance in Equation \eqref{eq:drift} is \(\epsilon\operatorname{Cov}(m_a(W),\tanh[m_a(W)-\E\{m_a(W)\mid Z\}]\mid Z=z)\), which is strictly positive whenever \(\operatorname{Var}\{m_a(W)\mid Z=z\}>0\), because the covariance between a nondegenerate random variable and a strictly increasing function of itself is strictly positive. By assumption, this event has positive probability under the distribution of \(Z\), so the second mechanism produces nonzero compression drift on a set of representation values of positive probability. The potential-outcome prediction problem from \(Z\) has not changed because the distribution of \((W,Y(a))\) and the map \(T\) have not changed.
\end{proof}

\begin{corollary}[Standardized substitution drift]
Under the conditions of Theorem 2 under \(P\), two probability laws must be distinguished. Let \(P\) denote the observed-data law of \((W,A,Z,Y)\), under which the conditional means \(\widetilde\mu_a(z)=\E_P(Y\mid A=a,Z=z)\) and \(\bar m^P_a(z)=\E_P\{m_a(W)\mid Z=z\}\) are defined. Separately, let \(Q^*\) denote the target law obtained by equipping \(W\) with the locked target distribution \(Q_W^*\) and generating \(Z\) from the locked representation map, define \(\bar m^*_a(z)=\E_{Q^*}\{m_a(W)\mid Z=z\}\), and let \(Q^*_Z\) denote the push-forward of \(Q_W^*\) under that map. Let \(Q_Z^\dagger\) denote the standardizing distribution over \(Z\) that the analyst actually uses; it need not equal \(Q^*_Z\). Assume the support condition that \(Q_Z^\dagger\) is dominated both by the observed-law distribution of \(Z\) restricted to \(\{z:\E_P\{\pi_a(W)\mid Z=z\}>0\}\) and by \(Q^*_Z\), so that \(\widetilde\mu_a\), \(\bar m^P_a\), and \(\bar m^*_a\) are all defined \(Q_Z^\dagger\)-almost everywhere. If an analyst reports
\[
\widetilde\psi^Z_a=\int \widetilde\mu_a(z)dQ_Z^\dagger(z)
\]
in place of the locked arm-specific target
\[
\psi^W_a=\int m_a(w)dQ_W^*(w),
\]
then
\begin{align}
\widetilde\psi^Z_a-\psi^W_a
&=\int\{\widetilde\mu_a(z)-\bar m^P_a(z)\}\,dQ_Z^\dagger(z) \nonumber\\
&\quad +\int\{\bar m^P_a(z)-\bar m^*_a(z)\}\,dQ_Z^\dagger(z) \nonumber\\
&\quad +\left[\int\bar m^*_a(z)\,dQ_Z^\dagger(z)-\int m_a(w)\,dQ_W^*(w)\right].
\label{eq:std_drift}
\end{align}
The first term is integrated conditional-compression drift under the observed law: by Theorem 2 it equals \(\int \operatorname{Cov}_P\{m_a(W),\pi_a(W)\mid Z=z\}\,/\,\E_P\{\pi_a(W)\mid Z=z\}\,dQ_Z^\dagger(z)\). The second term is conditional-law drift; it vanishes when \(P\) and \(Q^*\) induce the same conditional distribution of \(W\) given \(Z\). The third term is standardization drift; it vanishes when \(Q_Z^\dagger=Q^*_Z\), because then, by the tower property under \(Q^*\), \(\int\bar m^*_a(z)\,dQ_Z^\dagger(z)=\int m_a(w)\,dQ_W^*(w)\).
\end{corollary}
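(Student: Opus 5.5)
The decomposition in Equation \eqref{eq:std_drift} is a telescoping identity. The plan is to add and subtract the two intermediate quantities \(\int\bar m^P_a\,dQ_Z^\dagger\) and \(\int\bar m^*_a\,dQ_Z^\dagger\) inside \(\widetilde\psi^Z_a-\psi^W_a\), and then read off the three brackets. First, I would use the support condition to check that every integral is well defined. Domination of \(Q_Z^\dagger\) by the observed-law distribution of \(Z\) on \(\{z:\E_P\{\pi_a(W)\mid Z=z\}>0\}\) makes \(\widetilde\mu_a\) and \(\bar m^P_a\) defined \(Q_Z^\dagger\)-a.e.; the denominator in Theorem 2 is positive there. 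Domination by \(Q^*_Z\) does the same for \(\bar m^*_a\). I will also assume \(m_a(W)\) is integrable under both \(P\) and \(Q^*\), and that the three integrands are \(Q_Z^\dagger\)-integrable, so the sums may be split. This is a tacit regularity condition I would state explicitly.

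With the integrals defined, linearity of the integral gives
\[
\widetilde\psi^Z_a-\psi^W_a=\int(\widetilde\mu_a-\bar m^P_a)\,dQ_Z^\dagger+\int(\bar m^P_a-\bar m^*_a)\,dQ_Z^\dagger+\Big[\int\bar m^*_a\,dQ_Z^\dagger-\int m_a\,dQ_W^*\Big],
\]
since the intermediate terms cancel exactly. This establishes Equation \eqref{eq:std_drift}.

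The interpretive claims then follow term by term.
\begin{enumerate}
\item \emph{First term.} Theorem 2 applied under \(P\) gives, for \(Q_Z^\dagger\)-a.e.\ \(z\), \(\widetilde\mu_a(z)-\bar m^P_a(z)=\operatorname{Cov}_P\{m_a(W),\pi_a(W)\mid Z=z\}/\E_P\{\pi_a(W)\mid Z=z\}\). Here \(\E_P\{Y(a)\mid Z=z\}=\bar m^P_a(z)\) by the tower property, because \(Z=T(W)\) is \(W\)-measurable (or \((W,\xi)\)-measurable after augmentation). Integrating this identity gives the stated formula.
\item \emph{Second term.} If the conditional law of \(W\) given \(Z\) is the same under \(P\) and \(Q^*\), then \(\bar m^P_a=\bar m^*_a\) pointwise (a.e.), so the integrand vanishes. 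One point needs care: \(m_a(w)\) is defined as a function under \(P\) and is reused unchanged under \(Q^*\). The target law is defined by reweighting \(W\) and keeping \(m_a\) as a fixed function, which is exactly how \(\psi^W_a\) is defined.
\item \emph{Third term.} If \(Q_Z^\dagger=Q^*_Z\), the tower property under \(Q^*\) gives \(\int\bar m^*_a\,dQ^*_Z=\E_{Q^*}[\E_{Q^*}\{m_a(W)\mid Z\}]=\E_{Q^*}m_a(W)=\int m_a\,dQ_W^*\). This uses that \(Q^*_Z\) is the push-forward of \(Q_W^*\) under the locked map, so \(Q^*_Z\) is the \(Z\)-marginal of the joint law \(Q^*\).
\end{enumerate}

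The main obstacle is measure-theoretic bookkeeping rather than the algebra. Conditional expectations are defined only almost everywhere, and under different laws. One must therefore make sure that the versions of \(\bar m^P_a\) and \(\bar m^*_a\) chosen are unambiguous \(Q_Z^\dagger\)-a.e.; the two domination assumptions guarantee precisely this. When the representation map is stochastic, \(W\) should be replaced throughout by \((W,\xi)\), so that \(Z\) is a deterministic function and the push-forward statement is exact.
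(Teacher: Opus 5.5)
Your proposal is correct and takes the same approach as the paper. The paper's justification sits inline in the statement: add and subtract \(\int\bar m^P_a\,dQ_Z^\dagger\) and \(\int\bar m^*_a\,dQ_Z^\dagger\), then use Theorem 2 for the first term (your bridge \(\E_P\{Y(a)\mid Z\}=\bar m^P_a(Z)\) by the tower property makes this explicit), equality of conditional laws for the second, and the tower property under \(Q^*\) for the third. Your explicit integrability assumptions and your a.e.\ bookkeeping through the two domination conditions are tacit in the paper and are sensible additions.
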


This decomposition clarifies the scope of Equation \eqref{eq:drift}. The covariance identity is a conditional drift identity under the observed law, and its integrated term accounts for the entire substitution drift exactly when the sum of the last two terms in Equation \eqref{eq:std_drift} is zero. A sufficient condition is that the observed and target laws induce the same conditional distribution of \(W\) given \(Z\) and that the standardizing distribution is the locked push-forward; these conditions make the two terms vanish individually. The terms may also cancel without vanishing individually. If their sum is nonzero, a standardized estimand can change through conditional-law or standardization mismatch even when the within-\(Z\) compression drift is zero.

\subsection{Design erasure as a special case}

The no-free-substitution theorem directly covers design erasure. Let \(W=(V,D)\), where \(D\) contains pre-action design or stratification information, and let \(Z=V\) be the compressed representation that drops \(D\). If
\[
A\indep Y(a)\mid (V,D),\qquad P(A=a\mid V,D)=\pi_a(V,D),
\]
then the conditional drift after dropping \(D\) is governed by
\[
\operatorname{Cov}\{m_a(V,D),\pi_a(V,D)\mid V\}.
\]
Thus exchangeability given \(V\) alone is guaranteed only under a causal compression condition, not by predictive performance. This is why generated covariates should usually augment rather than replace assignment-, availability-, eligibility-, observation-, or positivity-relevant variables.

\subsection{Mediator-as-state drift}

\begin{theorem}[Mediator-as-state drift]
Suppose \(M\) is a post-action mediator candidate with potential values \(M(a)\), and the locked total proximal effect at state \(v\) is
\[
\Delta_{\mathrm{tot}}(v)=\E\{Y(a,M(a))-Y(a',M(a'))\mid V=v\}.
\]
An analysis that conditions on observed \(M\) as if it were pre-action state generally does not identify \(\Delta_{\mathrm{tot}}(v)\). In the linear structural model
\[
M=\alpha_0+\alpha_A A+\alpha_V^\top V+\eta,\qquad
Y=\beta_0+\beta_A A+\beta_M M+\beta_V^\top V+\varepsilon,
\]
with \(\E(\eta\mid A,V)=0\) and \(\E(\varepsilon\mid A,M,V)=0\), the total effect for \(a-a'=1\) is \(\beta_A+\beta_M\alpha_A\), whereas the population regression coefficient of \(A\) in the regression of \(Y\) on \((1,A,M,V)\) is \(\beta_A\). The drift from the total effect is \(-\beta_M\alpha_A\).
\end{theorem}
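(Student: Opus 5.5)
The proof has two parts: a general non-identification statement and an explicit computation in the linear structural model. For the computation I would read the displayed equations as structural, so that they also determine potential values: \(M(a)=\alpha_0+\alpha_A a+\alpha_V^\top V+\eta\) and \(Y(a,m)=\beta_0+\beta_A a+\beta_M m+\beta_V^\top V+\varepsilon\). The error terms do not depend on \(a\), which is the usual reading of a linear SEM.

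The total effect follows by composition. I substitute \(M(a)\) into \(Y(a,m)\) and take the contrast at fixed \(V=v\):
\[
Y(a,M(a))-Y(a',M(a'))=\beta_A(a-a')+\beta_M\alpha_A(a-a').
\]
The error terms and the \(V\) terms cancel, so for \(a-a'=1\) this equals \(\beta_A+\beta_M\alpha_A\). Since the contrast is constant, its conditional expectation given \(V=v\) is the same.

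For the regression coefficient, the assumption \(\E(\varepsilon\mid A,M,V)=0\) gives
\[
\E(Y\mid A,M,V)=\beta_0+\beta_A A+\beta_M M+\beta_V^\top V.
\]
This conditional mean is linear in the regressors \((1,A,M,V)\). A population least-squares projection onto a linear span that contains the conditional mean returns that conditional mean. The coefficient of \(A\) is therefore exactly \(\beta_A\), provided the second-moment matrix of \((1,A,M,V)\) is nonsingular. I would state this nondegeneracy explicitly; it holds, for example, when \(\eta\) has positive variance given \((A,V)\) and positivity holds. The drift is then \(\beta_A-(\beta_A+\beta_M\alpha_A)=-\beta_M\alpha_A\).

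The "generally does not identify" claim follows from this example. Whenever \(\beta_M\alpha_A\neq0\), the \(M\)-adjusted contrast differs from \(\Delta_{\mathrm{tot}}(v)\). The model satisfies exchangeability given \(V\) if \(A\) is randomized given \(V\), so Theorem 1's assumptions hold for \(V\) while conditioning on \(M\) breaks the target. The reason is that the \(M\)-adjusted quantity \(\E(Y\mid A=a,M=m,V=v)\) is a path-blocked or controlled-direct contrast. This mirrors the first row of Table 3 and the Total effect rule in Table 2.

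The main obstacle is making the regression step rigorous rather than heuristic. It needs the mean-independence assumption on \(\varepsilon\) together with a full-rank condition so that the projection coefficient is unique. I would note that no such condition is needed on \(\eta\) beyond \(\E(\eta\mid A,V)=0\); that condition is used only for the interpretation of \(\alpha_A\) as the effect of \(A\) on \(M\).
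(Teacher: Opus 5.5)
Your proposal is correct and follows the paper's proof essentially step for step. You substitute the structural equations to get the total effect \(\beta_A+\beta_M\alpha_A\), then use the linear conditional mean \(\E(Y\mid A,M,V)\) implied by \(\E(\varepsilon\mid A,M,V)=0\) to read off the projection coefficient \(\beta_A\). The explicit nonsingularity condition on the second-moment matrix of \((1,A,M,V)\) is a worthwhile addition that the paper leaves implicit, though your sufficient condition also needs \((1,V)\) itself to be nondegenerate.
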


\begin{proof}
Under the structural equations, setting \(A=a\) changes \(M\) by \(\alpha_A(a-a')\) and changes \(Y\) directly by \(\beta_A(a-a')\) and indirectly by \(\beta_M\alpha_A(a-a')\). For \(a-a'=1\), the total effect is \(\beta_A+\beta_M\alpha_A\). Moreover, \(\E(Y\mid A,M,V)=\beta_0+\beta_A A+\beta_M M+\beta_V^\top V\), so the population regression coefficient of \(A\) in the regression of \(Y\) on \((1,A,M,V)\) is \(\beta_A\). The two coincide only when the mediated path is null or cancels by special parameter restrictions.
\end{proof}

This theorem is not a claim that mediator analysis is invalid. It states that mediator adjustment targets a different estimand. A mediation estimand requires its own lock.

\subsection{Marker dependence is not marker intervention}

A state-dependent treatment contrast compares \(A=a\) with \(A=a'\) within levels of a state marker. It is not the effect of intervening on that marker.

\begin{theorem}[Marker-intervention non-equivalence]
There exist two structural causal models with the same observed distribution of \((U,A,Y)\) and the same state-dependent treatment effect \(\E\{Y(1)-Y(0)\mid U=u\}\), but different effects of intervening on the recorded marker \(U\).
\end{theorem}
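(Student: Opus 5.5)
The proof is by explicit construction. We build two structural causal models on binary variables that agree on everything observable but disagree on what an intervention on \(U\) does. Let \(L\) be a latent binary variable with \(P(L=1)=\tfrac12\), let \(A\sim\mathrm{Bernoulli}(\tfrac12)\) be randomized independently of everything else, and let \(\varepsilon\) be independent mean-zero noise.

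In model \(\mathcal M_1\), the marker is a pure readout of the latent state. Set \(U:=L\) and
\[
Y:=\beta_0+\beta_1 L+\tau A+\gamma AL+\varepsilon,
\]
so \(Y\) depends on \(L\) but not on \(U\) as a separate structural cause. In model \(\mathcal M_2\), the marker is itself causal. Keep \(U:=L\), but write
\[
Y:=\beta_0+\beta_1 U+\tau A+\gamma AU+\varepsilon,
\]
so \(L\) acts on \(Y\) only through \(U\).

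The key steps, in order, are as follows.

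First, check that the two observational laws agree. In both models \(U=L\) almost surely and the conditional law of \(Y\) given \((U,A)\) is the same, so the observed distribution of \((U,A,Y)\) is identical. I will confirm that substituting \(U=L\) yields the same joint law in each model.

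Second, check that the state-dependent effect agrees. \(A\) is randomized, so \(A\indep\{Y(1),Y(0)\}\mid U\) in both models. In both, \(\E\{Y(1)-Y(0)\mid U=u\}=\tau+\gamma u\).

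Third, compare the marker interventions. Under \(\mathcal M_1\), \(\mathrm{do}(U=u)\) cuts the edge \(L\to U\) and leaves \(Y\) unchanged, so \(\E\{Y(\mathrm{do}(U=1))\}-\E\{Y(\mathrm{do}(U=0))\}=0\). The effect on the treatment contrast, \(\E\{Y(1)-Y(0)\mid\mathrm{do}(U=u)\}\), is then constant at \(\tau+\gamma/2\). Under \(\mathcal M_2\), the same intervention shifts the mean by \(\beta_1+\gamma\E(A)=\beta_1+\gamma/2\), and it moves the treatment contrast to \(\tau+\gamma u\). Choosing \(\beta_1\neq 0\) or \(\gamma\neq 0\) makes the two interventional effects differ.

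The main obstacle is making ``intervening on the recorded marker'' precise, because \(U\) is a generated representation \(\calR_m(E,\xi)\). My plan is to model \(\mathrm{do}(U=u)\) as replacing the structural equation for \(U\) while leaving the equations for \(L\), \(A\) and \(Y\) intact. I will note that in \(\mathcal M_1\) the node \(U\) has no outgoing edge to \(Y\), which is exactly the marker-only case typed \(U\) in Table~\ref{tab:types}.

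I also need the construction to be non-degenerate. To make clear that equality of the observed laws is not an artifact of deterministic coupling, I will optionally replace \(U:=L\) with a noisy copy, \(U:=L\) with probability \(1-p\). In that version \(\mathcal M_2\) must be written with \(Y\) depending on \(U\) through \(\E(L\mid U)\) so that the observed laws still match. The deterministic version already suffices for the existence claim, so I will present that and mention the noisy variant only as a remark.
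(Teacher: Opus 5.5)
Your construction is correct and is essentially the paper's proof. The labels are swapped: your marker-only model is the paper's Model 2, and your causal model corresponds to the paper's Model 1, except that the paper takes \(U\) exogenous there rather than a child of \(L\), which changes nothing. One caution on the optional noisy-copy remark: making \(Y\) depend on \(U\) through \(\E(L\mid U)\) matches only the conditional mean of \(Y\) given \((U,A)\), not its full law. To keep the observed distributions identical, \(Y\) would have to depend on \(U\) through a fresh draw from the law of \(L\) given \(U\).
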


\begin{proof}
Let \(A\sim\mathrm{Bernoulli}(1/2)\) be randomized and independent of all other exogenous variables. Consider binary \(U\) and a mean-zero error \(\varepsilon\). Model 1 has \(U\sim\mathrm{Bernoulli}(1/2)\) and
\[
Y=\beta A+\gamma U+\theta AU+\varepsilon.
\]
Here \(U\) is a direct cause of \(Y\). Model 2 has a latent \(L\sim\mathrm{Bernoulli}(1/2)\), \(U=L\), and
\[
Y=\beta A+\gamma L+\theta AL+\varepsilon.
\]
Here \(U\) is a marker of \(L\) and has no causal arrow into \(Y\). Because \(U=L\) observationally in Model 2, both models induce the same joint distribution of \((U,A,Y)\). In both models,
\[
\E(Y\mid A=1,U=u)-\E(Y\mid A=0,U=u)=\beta+\theta u.
\]
However, in Model 1, intervening to change \(U\) changes \(Y\) by \(\gamma+\theta A\). In Model 2, an intervention on the recorded marker \(U\) that leaves \(L\) unchanged does not change \(Y\). Thus the same state-dependent treatment effect is compatible with different marker-intervention effects.
\end{proof}

If a contextual cue appears to work better when a generated fatigue marker is high, that does not imply that manipulating the marker would improve outcomes; it is a next-trial intervention hypothesis.

\subsection{Post-action leakage}

\begin{proposition}[Post-action leakage changes the conditioning event]
Let \(Z\) be generated after treatment and have potential values \(Z(a)\). Assume consistency and unconditional randomization, \(A\indep\{Y(a),Y(a'),Z(a),Z(a')\}\), with positive allocation probabilities. At representation values where both arm-specific conditional means are defined, the observed conditional contrast
\[
\E(Y\mid A=a,Z=z)-\E(Y\mid A=a',Z=z)
\]
identifies
\[
\E\{Y(a)\mid Z(a)=z\}-\E\{Y(a')\mid Z(a')=z\},
\]
not the locked contrast \(\E\{Y(a)-Y(a')\mid V=v\}\) and not, in general, a principal-stratum contrast such as \(\E\{Y(a)-Y(a')\mid Z(a)=Z(a')=z\}\).
\end{proposition}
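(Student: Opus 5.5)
The argument has two parts. The first is a direct identification computation for each arm. The second is a counterexample showing that the identified quantity generally differs from both the locked contrast and a principal-stratum contrast.

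For the identification step, fix $z$ with $P(A=a,Z=z)>0$ and $P(A=a',Z=z)>0$, interpreting conditioning on events of probability zero through regular conditional distributions if $Z$ is continuous. By consistency, on the event $\{A=a\}$ we have $Z=Z(a)$ and $Y=Y(a)$. Hence
\[
\E(Y\mid A=a,Z=z)=\E\{Y(a)\mid A=a,Z(a)=z\}.
\]
By unconditional randomization, $A$ is independent of the pair $(Y(a),Z(a))$, so the conditioning on $A=a$ can be dropped, giving $\E\{Y(a)\mid Z(a)=z\}$. The analogous argument for $a'$ gives $\E\{Y(a')\mid Z(a')=z\}$. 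Positivity of the allocation probabilities, together with the assumption that the arm-specific means are defined at $z$, ensures that these conditional expectations exist. Subtracting the two yields the identified contrast. Note that this mirrors the first step of the proof of Theorem 1, except that the conditioning variable is now arm-specific.

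For the "not, in general" part, I would build a single explicit model in which the identified contrast differs from both the principal-stratum contrast and the locked contrast. Take $V$ to be trivial, so the locked contrast is the average treatment effect $\E\{Y(a)-Y(a')\}$. Let $a=1$, $a'=0$, and let $L\sim\mathrm{Bernoulli}(1/2)$ be latent. Set $Z(0)=0$ and $Z(1)=L$, and let $Y(0)=L$ and $Y(1)=L+\tau$. Then:
\begin{itemize}
\item At $z=0$, we get $\E\{Y(1)\mid Z(1)=0\}=\tau$ (only units with $L=0$) and $\E\{Y(0)\mid Z(0)=0\}=1/2$ (all units). The identified contrast is therefore $\tau-1/2$.
\item The locked contrast is $\tau$.
\item The principal stratum $\{Z(1)=Z(0)=0\}=\{L=0\}$ gives the contrast $\tau$.
\end{itemize}
So the identified contrast differs from both targets. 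The mechanism is that conditioning on $Z=z$ selects different latent compositions in the two arms. As a remark, I would add a degenerate case in which the two quantities do coincide: if $Z(a)=Z(a')$ almost surely, the identified contrast equals the principal-stratum contrast. This shows that the qualifier ``in general'' is sharp.

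The main obstacle is measure-theoretic care rather than substance. I need to state precisely the sense in which $\E\{Y(a)\mid Z(a)=z\}$ is defined, namely as a version of the conditional expectation on the support of $Z(a)$. I also need to check that independence of $A$ from the joint vector of potential outcomes licenses removing $A$ from a conditioning event defined through $Z(a)$. This follows because independence of $A$ from $(Y(a),Z(a))$ implies that the conditional law of $Y(a)$ given $(A,Z(a))$ does not depend on $A$.
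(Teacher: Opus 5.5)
Your identification step is the paper's argument. Consistency replaces $(Y,Z)$ by $(Y(a),Z(a))$ on $\{A=a\}$, joint independence of $A$ from $(Y(a),Z(a))$ removes the conditioning on $A=a$, and the same is done for each arm. The paper handles the ``not, in general'' part only with a verbal remark that these are treatment-specific strata unless $Z(a)=Z(a')$ almost surely; your explicit example ($Z(0)=0$, $Z(1)=L$, $Y(0)=L$, $Y(1)=L+\tau$) gives $\tau-1/2$ at $z=0$ against $\tau$ for both the locked and principal-stratum contrasts, which is correct and a useful strengthening of the same approach.
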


\begin{proof}
Consistency first gives \(\E(Y\mid A=a,Z=z)=\E\{Y(a)\mid A=a,Z(a)=z\}\). Unconditional randomization then removes \(A=a\) from the latter conditioning set. The comparison arm similarly gives \(\E\{Y(a')\mid Z(a')=z\}\). Unless \(Z(a)=Z(a')\) almost surely or special cross-world conditions are imposed, these are treatment-specific strata rather than a common pre-action state.
\end{proof}

For the state-dependent randomization used elsewhere in this paper, impose joint conditional randomization given the admissible pre-action state \(V\) instead. The corresponding identity is
\begin{align*}
&\E(Y\mid A=a,Z=z,V=v)-\E(Y\mid A=a',Z=z,V=v)\\
&\quad=\E\{Y(a)\mid Z(a)=z,V=v\}
-\E\{Y(a')\mid Z(a')=z,V=v\}.
\end{align*}
Dropping \(V\) is not justified by state-dependent randomization alone.

This proposition covers covariates extracted from post-treatment conversations, post-action wearable segments, post-session notes, or outcome-adjacent text. Such covariates may be important, but they require a different estimand. For text, conversation, and narrative representations, leakage may be semantic as well as temporal.

\subsection{Outcome-guided covariate selection and claim drift}

\begin{proposition}[Naive confirmatory claims after covariate selection are anti-conservative]
Let \(T_1,\ldots,T_p\) be independent standard normal statistics under \(p\) null effect-modification hypotheses, and let \(z_{0.975}=\Phi^{-1}(0.975)\approx 1.96\) denote the standard normal \(0.975\) quantile. Select \(J=\arg\max_j |T_j|\) and then test the selected covariate using the naive rule \(|T_J|>z_{0.975}\). The type I error probability is
\[
P(|T_J|>z_{0.975})=1-0.95^p,
\]
which exceeds 0.05 for every \(p>1\) and approaches one as \(p\) grows.
\end{proposition}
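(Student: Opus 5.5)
The plan is to reduce the event \(\{|T_J|>z_{0.975}\}\) to the event that at least one of the \(p\) statistics exceeds the threshold in absolute value. Then independence gives the probability directly.

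\textbf{Step 1: reduce to the maximum.} Because \(J=\arg\max_j|T_j|\), we have \(|T_J|=\max_{1\le j\le p}|T_j|\). Ties occur with probability zero for continuous statistics, and in any case do not affect the value of the maximum. Hence
\[
\{|T_J|>z_{0.975}\}=\bigcup_{j=1}^p\{|T_j|>z_{0.975}\}.
\]
Taking complements, \(P(|T_J|\le z_{0.975})=P(|T_j|\le z_{0.975}\ \text{for all } j)\).

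\textbf{Step 2: use independence and the marginal law.} Under the null each \(T_j\sim N(0,1)\). By the definition of the quantile and symmetry,
\[
P(|T_j|\le z_{0.975})=2\Phi(z_{0.975})-1=0.95.
\]
Independence factorizes the joint probability into \(0.95^p\). The type I error is therefore \(1-0.95^p\).

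\textbf{Step 3: the two qualitative claims.} The map \(p\mapsto 1-0.95^p\) is strictly increasing in \(p\), since \(0<0.95<1\). At \(p=1\) it equals \(0.05\), so it exceeds \(0.05\) for every integer \(p>1\). Because \(0.95^p\to0\), the error tends to one as \(p\) grows.

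There is no real obstacle here. The only point needing care is Step 1: the naive rule evaluates the selected statistic as though \(J\) were fixed in advance, whereas \(|T_J|\) is actually distributed as the maximum of \(p\) independent absolute normals. I would state the tie-breaking convention explicitly and note that it is immaterial, because the event depends only on \(\max_j|T_j|\).
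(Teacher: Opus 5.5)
Your proof is correct and follows the paper's argument: \(P(|T_j|\le z_{0.975})=2\Phi(z_{0.975})-1=0.95\), independence gives \(P(\max_j|T_j|\le z_{0.975})=0.95^p\), and the complement is the type I error. Your Step 3 makes explicit the monotonicity and limit claims that the paper states without separate argument.
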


\begin{proof}
Under the global null, \(P(|T_j|\leq z_{0.975})=2\Phi(z_{0.975})-1=0.95\) exactly for each \(j\). Independence gives \(P(\max_j |T_j|\leq z_{0.975})=0.95^p\). The complement is the stated error probability.
\end{proof}

The calculation shows why the claim-status filter is part of the causal grammar. A generated covariate selected because it looked important is exploratory unless the analysis uses pre-specification, sample splitting, simultaneous inference, or a valid selective-inference procedure.

\subsection{Noisy state markers attenuate effect modification}

\begin{proposition}[Attenuation under classical state-measurement error]
Suppose \(A\) is randomized independently of \((U,e)\), \(\E(U)=\E(e)=0\), \(e\indep U\), and
\[
Y=\beta_0+\beta_AA+\lambda U+\theta AU+\varepsilon,\qquad U^{\obs}=U+e,
\]
with \(\E(\varepsilon\mid A,U,e)=0\). In the population linear projection of \(Y\) on \((1,A,U^{\obs},AU^{\obs})\), the coefficient of \(AU^{\obs}\) is
\[
\theta^{\obs}=\theta\frac{\operatorname{Var}(U)}{\operatorname{Var}(U)+\operatorname{Var}(e)}.
\]
\end{proposition}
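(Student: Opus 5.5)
Write \(W=U^{\obs}\). The projection coefficients solve the normal equations \(\E[X X^\top]b=\E[XY]\) with \(X=(1,A,W,AW)\). The plan is to reduce this to a simpler projection by decomposing \(Y\) into the part explained linearly by \(X\) and a residual orthogonal to \(X\), rather than inverting a \(4\times4\) matrix directly. Let \(\kappa=\operatorname{Var}(U)/\{\operatorname{Var}(U)+\operatorname{Var}(e)\}\). Because \(e\indep U\), both have mean zero, and (for the exact linear-projection statement) we may work with the population best linear predictor \(\E^*(U\mid W)=\kappa W\). Write \(U=\kappa W+\nu\), where \(\E(\nu)=0\) and \(\E(\nu W)=0\); this only uses second moments.

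Substituting into the outcome model gives
\[
Y=\beta_0+\beta_A A+\lambda\kappa W+\theta\kappa AW+\{\lambda\nu+\theta A\nu+\varepsilon\}.
\]
The claim then follows once the bracketed residual is uncorrelated with each of \(1,A,W,AW\), since the projection coefficient on \(AW\) is then \(\theta\kappa=\theta^{\obs}\).

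The orthogonality checks proceed in order.
\begin{enumerate}
\item For \(\varepsilon\): \(\E(\varepsilon\mid A,U,e)=0\) and each regressor is a function of \((A,U,e)\), so \(\varepsilon\) is orthogonal to all four regressors.
\item For \(\lambda\nu\): \(\E(\nu)=0\) and \(\E(\nu W)=0\) are immediate. Since \(A\) is independent of \((U,e)\) and hence of \((\nu,W)\), we also get \(\E(A\nu)=\E(A)\E(\nu)=0\) and \(\E(AW\nu)=\E(A)\E(W\nu)=0\).
\item For \(\theta A\nu\): with \(A\) binary, \(A^2=A\), and independence gives \(\E(A\nu)=0\), \(\E(A\cdot A\nu)=\E(A)\E(\nu)=0\), \(\E(W A\nu)=\E(A)\E(W\nu)=0\), and \(\E(AW\cdot A\nu)=\E(A)\E(W\nu)=0\). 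If \(A\) were not binary, the same argument goes through with \(\E(A^2)\) in place of \(\E(A)\).
\end{enumerate}

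The main obstacle is being careful that \(\nu\) is only uncorrelated with \(W\), not independent of it. The product terms must therefore be handled through the factorization afforded by \(A\indep(U,e)\), so that every required moment reduces to \(\E(\nu)\) or \(\E(\nu W)\). Nondegeneracy of \(\E[XX^\top]\) also needs to hold, which requires \(\operatorname{Var}(A)>0\) and \(\operatorname{Var}(W)>0\); this is implicit in the statement.
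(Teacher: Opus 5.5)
Your proof is correct, and it takes a different route from the paper's. The paper uses the fact that \(A\) is binary. The span of \((1,A,U^{\obs},AU^{\obs})\) coincides with that of \(1(A=a)\) and \(1(A=a)U^{\obs}\) for \(a\in\{0,1\}\), so the joint projection splits into separate within-arm projections of \(Y\) on \((1,U^{\obs})\). Independence of \(A\) from \((U,e)\) makes each arm-specific slope equal to \((\lambda+\theta a)\kappa\), and the interaction coefficient is the difference of these slopes. You instead write \(U=\kappa W+\nu\) with \(\nu\) only uncorrelated with \(W\), substitute, and check that the composite error \(\lambda\nu+\theta A\nu+\varepsilon\) is orthogonal to every regressor. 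You use \(A\indep(U,e)\) to reduce each mixed moment to \(\E(\nu)\) or \(\E(\nu W)\), and all of those checks go through.

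Your route buys generality and completeness:
\begin{itemize}
\item it never uses that \(A\) is binary; it needs only \(\operatorname{Var}(A)>0\), with \(\E(A^2)\) appearing in the factorizations;
\item it identifies the whole coefficient vector \((\beta_0,\beta_A,\lambda\kappa,\theta\kappa)\) at once, so it also shows that the main-effect slope on \(U^{\obs}\) is attenuated by the same factor while the intercept and the coefficient of \(A\) are unaffected;
\item it flags the nondegeneracy condition on \(\E(XX^\top)\), which the paper leaves implicit.
\end{itemize}
The paper's route is shorter and reads the attenuated interaction as a difference of two arm-specific slopes, each attenuated by the reliability ratio \(\kappa\). The price is that it relies on a saturation property that holds only for a two-valued treatment.
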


\begin{proof}
Because \(A\) is binary and the projection includes \((1,A,U^{\obs},AU^{\obs})\), the population least-squares fit is saturated in \(A\): it is equivalent to fitting separate linear projections of \(Y\) on \((1,U^{\obs})\) within each arm. Within arm \(A=a\), the structural model gives \(Y=\beta_0+\beta_A a+(\lambda+\theta a)U+\varepsilon\), and since \(A\indep(U,e)\), \(e\indep U\), and \(\E(\varepsilon\mid A,U,e)=0\), the arm-specific projection slope on \(U^{\obs}=U+e\) is
\[
\frac{\operatorname{Cov}(Y,U^{\obs}\mid A=a)}{\operatorname{Var}(U^{\obs})}=(\lambda+\theta a)\,\kappa,
\qquad
\kappa=\frac{\operatorname{Var}(U)}{\operatorname{Var}(U)+\operatorname{Var}(e)}.
\]
The coefficient of \(AU^{\obs}\) in the joint projection equals the difference of the arm-specific slopes, \((\lambda+\theta)\kappa-\lambda\kappa=\theta\kappa\), which is the stated result.
\end{proof}

This proposition should not be overgeneralized. If measurement error is differential, if \(A\) is state-adaptive, or if the representation affects selection and missingness, representation error can induce bias beyond simple attenuation \cite{carroll2006}.

\section{Orthogonal estimation of the locked effect}

The type discipline determines which variables may enter the nuisance functions. Estimation can then use standard semiparametric machinery. The notation below distinguishes two targets that are often conflated: an empirical target evaluated at the observed target states, and a superpopulation target that also averages over the sampling distribution of those states.

Let \(V_{is}\) be the admissible pre-action state-design object. Define
\begin{align*}
\pi_a(v) &= P(A_{is}=a\mid V_{is}=v),\\
\rho_a(v) &= P(R_{is}=1\mid A_{is}=a,V_{is}=v),\\
\mu_a(v) &= \E(Y_{is}\mid A_{is}=a,R_{is}=1,V_{is}=v).
\end{align*}
Let \(h_{is}\geq0\) be target weights determined by a fixed, cluster-local rule. In the theorem below we use participant-normalized weights, \(\sum_{s=1}^{S_i}h_{is}=1\) almost surely. Let \(\mathcal F^-_{is}\) denote the information available immediately before the current allocation, and require
\[
\sigma(h_{is})\subseteq\sigma(V_{is})\subseteq\mathcal F^-_{is}.
\]
Thus \(V_{is}\) includes the pre-action information used to determine the weight, and the identification and nuisance-model conditions are imposed given that same \(V_{is}\). A prespecified rule cannot use a later session's state, allocation, or outcome to determine an earlier session's weight. For the stated theorem, the bounded schedule \(1\leq S_i\leq S_{\max}<\infty\) is fixed before participant \(i\)'s first allocation and is included in \(V_{is}\). The choice \(h_{is}=1/S_i\) is then allowed. Weighting by an outcome-dependent realized session count is not covered. Appendix A treats sample-normalized session weights separately as a ratio estimator.

For a single proximal outcome time, define the augmented score contribution
\begin{align}
\phi_{is}(a,a')
&=\frac{1(A_{is}=a)R_{is}}{\pi_a(V_{is})\rho_a(V_{is})}\{Y_{is}-\mu_a(V_{is})\}
-\frac{1(A_{is}=a')R_{is}}{\pi_{a'}(V_{is})\rho_{a'}(V_{is})}\{Y_{is}-\mu_{a'}(V_{is})\}\notag\\
&\quad +\mu_a(V_{is})-\mu_{a'}(V_{is}),
\label{eq:score}
\end{align}
and let \(\phi^{\mathrm{res}}_{is}(a,a')\) denote the same expression with the final regression term \(\mu_a(V_{is})-\mu_{a'}(V_{is})\) removed.

The cross-fitted estimator is
\begin{equation}
\widehat\Psi=\frac{1}{N}\sum_{i=1}^N\sum_{s=1}^{S_i}h_{is}\widehat\phi_{is}(a,a'),
\label{eq:estimator}
\end{equation}
where \(\widehat\phi\) uses nuisance estimates trained on folds that exclude participant \(i\).

For the random empirical target evaluated at the observed target states and weights,
\[
\Psi_N=\frac{1}{N}\sum_{i=1}^N\sum_{s=1}^{S_i}h_{is}\{\mu_a(V_{is})-\mu_{a'}(V_{is})\}.
\]
For a superpopulation target,
\[
\Psi=\E\left[\sum_{s=1}^{S_i}h_{is}\{\mu_a(V_{is})-\mu_{a'}(V_{is})\}\right].
\]
The corresponding cluster contributions are
\[
\varphi_i^{\emp}=\sum_{s=1}^{S_i}h_{is}\phi^{\mathrm{res}}_{is}(a,a'),\qquad
\varphi_i^{\supop}=\sum_{s=1}^{S_i}h_{is}\phi_{is}(a,a')-\Psi.
\]

\begin{theorem}[Orthogonal estimation for empirical and superpopulation targets]
Assume the identification conditions above, identically distributed independent clusters, and the predictable, participant-normalized weights and bounded predeclared schedules specified above. Assume finite second moments of the weighted outcome regressions, uniformly bounded conditional second moments of \(Y_{is}-\mu_b(V_{is})\) given \((A_{is}=b,R_{is}=1,V_{is})\), and positivity bounds \(\pi_b(v),\rho_b(v)\geq c>0\), for each arm \(b\in\{a,a'\}\). The representation and weighting rules are fixed independently of the evaluation data.

Use a fixed number of folds, each containing a positive limiting fraction of the clusters, and train each nuisance fit outside its evaluation fold. The fitted probabilities take values in \([c,1]\). Define the weighted norm by
\[
\|f\|_{2,h}^2=\E\left[\sum_{s=1}^{S_i}h_{is}f(V_{is})^2\right].
\]
For each fold and each arm \(b\), assume
\[
\|\widehat\mu_b-\mu_b\|_{2,h}
\bigl(\|\widehat\pi_b-\pi_b\|_{2,h}+\|\widehat\rho_b-\rho_b\|_{2,h}\bigr)
=o_p(N^{-1/2}),
\]
and individual consistency,
\[
\|\widehat\mu_b-\mu_b\|_{2,h}
+\|\widehat\pi_b-\pi_b\|_{2,h}
+\|\widehat\rho_b-\rho_b\|_{2,h}=o_p(1).
\]
Then, for the empirical target,
\[
\sqrt N(\widehat\Psi-\Psi_N)
=\frac{1}{\sqrt N}\sum_{i=1}^N\varphi_i^{\emp}+o_p(1)
\rightsquigarrow N\bigl(0,\E[(\varphi_i^{\emp})^2]\bigr).
\]
For the superpopulation target under the same cluster-local weighting rule,
\[
\sqrt N(\widehat\Psi-\Psi)
=\frac{1}{\sqrt N}\sum_{i=1}^N\varphi_i^{\supop}+o_p(1)
\rightsquigarrow N\bigl(0,\E[(\varphi_i^{\supop})^2]\bigr).
\]
The corresponding sample variance of the estimated cluster contributions consistently estimates the variance in each limit; the standard error of \(\widehat\Psi\) is its square root divided by \(\sqrt N\). Normal confidence intervals require a positive limiting variance. The empirical-target statement concerns marginal error about the random target \(\Psi_N\). It does not assert coverage conditional on the entire sequence of target states, because later states may contain earlier allocations and outcomes; such conditional coverage requires an additional conditional central limit theorem.
\end{theorem}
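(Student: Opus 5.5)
The plan is the standard cross-fitting decomposition, carried out at the cluster level. Fix an evaluation fold \(k\) with index set \(I_k\), and condition on the training data used to fit \(\widehat\eta=(\widehat\mu_b,\widehat\pi_b,\widehat\rho_b)_{b\in\{a,a'\}}\). Let \(\Phi_i(\eta)=\sum_s h_{is}\phi_{is}(a,a';\eta)\) denote the cluster score as a function of the nuisances. Then
\[
\widehat\Psi-\Psi_N=\frac1N\sum_i\varphi_i^{\emp}+\frac1N\sum_k\sum_{i\in I_k}\bigl\{\Phi_i(\widehat\eta)-\Phi_i(\eta)\bigr\},
\]
because \(\Phi_i(\eta)-\sum_s h_{is}\{\mu_a-\mu_{a'}\}(V_{is})=\varphi_i^{\emp}\). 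Within each fold, split the remainder into a centered empirical-process term and a bias term:
\[
(\mathbb P_{k}-P)\{\Phi(\widehat\eta)-\Phi(\eta)\}+P\{\Phi(\widehat\eta)-\Phi(\eta)\}.
\]
Here \(P\) is the expectation over a fresh cluster with \(\widehat\eta\) held fixed.

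\emph{Bias term.} For each session, first condition on \(V_{is}\). This is legitimate because \(h_{is}\) is \(\sigma(V_{is})\)-measurable. Then use \(\E\{1(A=b)R\mid V\}=\pi_b\rho_b\) and \(\E\{Y\mid A=b,R=1,V\}=\mu_b\). The arm-\(b\) contribution becomes
\[
\E\Bigl[\textstyle\sum_s h_{is}\bigl\{\tfrac{\pi_b\rho_b}{\widehat\pi_b\widehat\rho_b}-1\bigr\}(\mu_b-\widehat\mu_b)(V_{is})\Bigr].
\]
Since \(\widehat\pi_b,\widehat\rho_b\ge c\), the factor in braces is bounded by \(c^{-2}(|\widehat\pi_b-\pi_b|+|\widehat\rho_b-\rho_b|)\). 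Cauchy–Schwarz under the weighted measure \(\sum_s h_{is}\delta_{V_{is}}\), which has total mass one per cluster because \(\sum_s h_{is}=1\), bounds the term by a constant times \(\|\widehat\mu_b-\mu_b\|_{2,h}(\|\widehat\pi_b-\pi_b\|_{2,h}+\|\widehat\rho_b-\rho_b\|_{2,h})=o_p(N^{-1/2})\). Two points matter here. First, within-participant dependence never enters, because the expectation is linear across sessions. Second, predictability, \(\sigma(h_{is})\subseteq\sigma(V_{is})\), is exactly what permits the conditioning step.

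\emph{Empirical-process term.} Conditional on the training folds, the clusters in \(I_k\) are i.i.d. The conditional second moment of the centered term is therefore at most \(|I_k|^{-1}\E[\{\Phi_i(\widehat\eta)-\Phi_i(\eta)\}^2]\). By Chebyshev, this term is \(o_p(N^{-1/2})\) provided the \(L_2\) norm of \(\Phi_i(\widehat\eta)-\Phi_i(\eta)\) vanishes. I would bound this norm using Jensen, \((\sum_s h_{is}x_s)^2\le\sum_s h_{is}x_s^2\), which reduces everything to session-level differences under \(\|\cdot\|_{2,h}\). The inverse-weight differences are controlled by \(c\). The residual factors \(Y-\mu_b\) have uniformly bounded conditional second moments, so their contribution is \(O(\|\widehat\pi_b-\pi_b\|_{2,h}+\|\widehat\rho_b-\rho_b\|_{2,h})\). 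The term \(\mu_b-\widehat\mu_b\) is multiplied by \(1-1(A=b)R/(\widehat\pi_b\widehat\rho_b)\), which is bounded by \(1+c^{-2}\). By individual consistency the whole bound is \(o_p(1)\). Because the number of folds is fixed, the fold contributions sum to \(o_p(N^{-1/2})\).

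\emph{CLT, superpopulation case, and variance estimation.} The expansion for \(\Psi_N\) now follows, and the Lindeberg–Lévy CLT applies to the i.i.d. mean-zero variables \(\varphi_i^{\emp}\). They have finite variance by the moment and positivity bounds together with \(S_i\le S_{\max}\); mean zero follows from the same conditioning argument used for the bias term. For the superpopulation target, write \(\widehat\Psi-\Psi=(\widehat\Psi-\Psi_N)+(\Psi_N-\Psi)\). The second piece is \(N^{-1}\sum_i[\sum_s h_{is}(\mu_a-\mu_{a'})(V_{is})-\Psi]\), and adding it to \(\varphi_i^{\emp}\) gives exactly \(\varphi_i^{\supop}\). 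Finite second moments of the weighted regressions supply the CLT. For the variance estimator, I would show that \(N^{-1}\sum_i\widehat\varphi_i^2-N^{-1}\sum_i\varphi_i^2\to0\). This uses \(|x^2-y^2|\le(x-y)^2+2|y||x-y|\), the same \(L_2\) bound as above, and, in the superpopulation case, replacement of \(\Psi\) by \(\widehat\Psi\); the LLN then handles \(N^{-1}\sum_i\varphi_i^2\). I expect the main obstacle to be the empirical-process step. The difficulty is making precise that the \(h\)-weighted session norm controls the cluster-level \(L_2\) norm despite arbitrary within-cluster dependence and random \(S_i\). The Jensen step with normalized weights is the device I would rely on. If it fails, the fallback is the cruder bound \(S_{\max}\max_s\), which would require the nuisance rates to be stated under an unweighted session norm.
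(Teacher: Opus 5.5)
Your proposal is correct and follows essentially the same route as the paper's proof (main text and Appendix B): predictability lets you condition on \(V_{is}\) to get the exact bias \(h_{is}(\widehat\mu_b-\mu_b)\{1-\pi_b\rho_b/(\widehat\pi_b\widehat\rho_b)\}\), which the product rate controls, and the inequality \((\sum_s h_{is}x_s)^2\le\sum_s h_{is}x_s^2\) for normalized weights gives the cluster-level \(L_2\) convergence behind the conditional Chebyshev step, the CLT, and variance consistency. That Jensen step is exactly the paper's device and it goes through, so you do not need the fallback to an unweighted session norm.
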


\begin{proof}
Predictability with respect to the same information used in the nuisance models gives \(\E[h_{is}\phi^{\mathrm{res}}_{is}\mid V_{is}]=0\). The weighted score is orthogonal to each nuisance perturbation. Cluster cross-fitting, the product-rate conditions, and individual consistency make the mean remainder and the centered score-estimation error negligible at the \(N^{-1/2}\) scale. The ordinary independent-cluster central limit theorem and cluster-score variance consistency give the two limits. Appendix B supplies the weighted moment calculation and remainder bound.
\end{proof}

In randomized trials, \(\pi_a(v)\) is often known by design. In observational sequential studies, it must be estimated or otherwise justified. The type discipline does not replace this requirement; it determines the pre-action information with respect to which the requirement is meaningful.

\section{Type uncertainty and audit discipline}

The classifier \(\calC\) is not assumed to be an oracle delivered by an AI system or by human auditors. It is a recorded causal judgement about a versioned covariate. The purpose of the audit is not to eliminate judgement but to expose it. A wrong type assignment can still produce a wrong locked analysis; the framework therefore treats type assignment as an auditable design decision, not as an automatically validated statistical output.

When a covariate could plausibly have more than one role, the primary analysis should use the most downstream role that can affect the estimand. A mixed pre-action and post-action transcript summary is treated as post-action unless the pre-action component is separable. A covariate derived from outcome-adjacent notes is treated as type \(Y\) unless an independent source window proves otherwise. A variable that determines availability, eligibility, randomization probability, or observation receives type \(D\) or Obs even if it is also predictive of the outcome. Outcome-trained representations are exploratory for effect modification unless the training data, selection rule, and inferential adjustment are external to or separated from the confirmatory analysis.

If auditors disagree about whether a covariate is a state marker, mediator, outcome proxy, or design variable, the primary analysis should either exclude the covariate from the confirmatory state or present separate locked estimands corresponding to the competing roles. An implementation may record a type-disagreement matrix, the conservative primary role, and the alternative locks used for sensitivity analysis. We do not model auditor disagreement as ordinary measurement error because a type disagreement is often a disagreement about the causal question itself.

\section{Stress-test simulation}

The simulation is included to illustrate the consequences of type violations, not to prove the theorems or establish clinical efficacy. The data-generating mechanism is a repeated-session setting with state-adaptive assignment, a design-relevant pre-action variable, a coarse state, a refined generated state, a post-action mediator, a post-action leakage covariate, and optional observed-state missingness. It is a reproducible artificial stress test reported using the ADEMP structure \cite{morris2019}.

For participant \(i\) and session \(s\), a design or availability variable \(D_{is}\) is generated before treatment. A latent state \(U^{\true}_{is}\) evolves from previous state, previous outcome, previous treatment, \(D_{is}\), and a participant covariate. A coarse pre-action state \(U^{(0)}\) is used by the assignment mechanism. A refined representation \(U^{(1)}\) measures the same underlying state with less noise. Treatment is assigned by
\[
P(A_{is}=1)=\expit\{-0.10+0.45U^{(0)}_{is}+0.80D_{is}+0.06(s-3.5)\},
\]
bounded away from zero and one. A post-action mediator is generated as
\[
M_{is}=0.50A_{is}+0.45U^{\true}_{is}+0.20C_i+0.25D_{is}+\zeta_{is},
\]
and the post-action leakage covariate is \(L_{is}=M_{is}+\kappa_{is}\). The outcome model is
\begin{align*}
Y_{is} &= b_i+0.45C_i+0.55U^{\true}_{is}+0.50D_{is}+0.20Y_{i,s-1}+0.05s \\
&\quad +A_{is}(0.20+0.45U^{\true}_{is})+0.50M_{is}+\varepsilon_{is}.
\end{align*}
The total effect at state \(U^{\true}\) is therefore \(0.45+0.45U^{\true}\). The target is the empirical standardized total effect across generated sessions. In the missing-outcome scenario, observation depends on observed pre-action and history variables, so the missingness mechanism is observed-state MAR by construction.

We compare seven specifications. The naive model uses only \(A\). The locked coarse-state model uses the design-relevant state variables and the coarse interaction with treatment. The predictive-refinement model replaces design-relevant information by the refined representation, illustrating design erasure. The admissible refinement keeps design information while adding \(U^{(1)}\) and allowing effect modification by \(U^{(1)}\). The mediator-adjusted model incorrectly includes post-action \(M\) in the primary total-effect regression. The leakage-adjusted model incorrectly includes post-action \(L\) as if it were state. The oracle model uses \(U^{\true}\) and is not available in practice. All simulation tables use 1000 Monte Carlo replications. For a coverage estimate \(\widehat p\) based on \(R\) replications, the Monte Carlo standard error is \(\{\widehat p(1-\widehat p)/R\}^{1/2}\); at nominal 95 percent coverage and \(R=1000\) this is approximately 0.0069. Small differences in coverage near the nominal level should therefore not be over-interpreted.

\begin{table}[htbp]
\centering
\caption{Controlled failure-mode simulation for the standardized total effect. Results are based on 1000 Monte Carlo replications with \(N=140\) participants and \(S=6\) sessions.}
\label{tab:sim-main}
\small
\setlength{\tabcolsep}{4.5pt}
\begin{tabular}{llrrrrr}
\toprule
Scenario & Method & True & Estimate & Bias & RMSE & Coverage \\
\midrule
Complete & Untyped naive contrast & 1.068 & 1.852 & 0.784 & 0.799 & 0.001 \\
Complete & Coarse locked state & 1.068 & 1.055 & -0.014 & 0.095 & 0.948 \\
Complete & Design erasure & 1.068 & 1.209 & 0.141 & 0.169 & 0.655 \\
Complete & Admissible refinement & 1.068 & 1.057 & -0.011 & 0.092 & 0.947 \\
Complete & Mediator-as-state & 1.068 & 0.667 & -0.402 & 0.412 & 0.009 \\
Complete & Post-action leakage & 1.068 & 0.842 & -0.227 & 0.247 & 0.358 \\
Complete & Oracle state & 1.068 & 1.064 & -0.004 & 0.089 & 0.946 \\
\addlinespace
Observed-state MAR & Untyped naive contrast & 1.069 & 1.859 & 0.791 & 0.809 & 0.004 \\
Observed-state MAR & Coarse locked state & 1.069 & 1.064 & -0.005 & 0.106 & 0.960 \\
Observed-state MAR & Design erasure & 1.069 & 1.214 & 0.145 & 0.177 & 0.727 \\
Observed-state MAR & Admissible refinement & 1.069 & 1.063 & -0.006 & 0.102 & 0.952 \\
Observed-state MAR & IPCW admissible refinement & 1.069 & 1.060 & -0.009 & 0.103 & 0.956 \\
Observed-state MAR & Mediator-as-state & 1.069 & 0.698 & -0.371 & 0.385 & 0.070 \\
Observed-state MAR & Post-action leakage & 1.069 & 0.864 & -0.205 & 0.231 & 0.552 \\
Observed-state MAR & Oracle state & 1.069 & 1.066 & -0.002 & 0.100 & 0.956 \\
\bottomrule
\end{tabular}
\end{table}

Table \ref{tab:sim-main} illustrates four points. First, state-adaptive assignment makes naive treatment contrasts unusable. Second, refined representation is not sufficient if it erases design-relevant information. Third, retaining the design state while adding the refined state performs similarly to the oracle benchmark in this simple setting. Fourth, mediator adjustment and post-action leakage target the wrong estimand and produce large negative bias for the total effect. These results are consistent with the drift theorems above. They should not be read as a comprehensive comparison of estimators.

\begin{table}[htbp]
\centering
\caption{Controlled no-free-substitution stress test. Results are based on 1000 Monte Carlo replications for each variant.}
\label{tab:sim-substitution}
\small
\begin{tabular}{llrrrrr}
\toprule
Variant & Method & True & Estimate & Bias & RMSE & Coverage \\
\midrule
Baseline & Design erasure & 1.068 & 1.210 & 0.142 & 0.172 & 0.668 \\
Baseline & Admissible refinement & 1.068 & 1.060 & -0.008 & 0.095 & 0.939 \\
Baseline & Oracle state & 1.068 & 1.066 & -0.002 & 0.093 & 0.932 \\
Positivity stress & Design erasure & 1.070 & 1.317 & 0.246 & 0.266 & 0.296 \\
Positivity stress & Admissible refinement & 1.070 & 1.057 & -0.013 & 0.103 & 0.940 \\
Positivity stress & Oracle state & 1.070 & 1.067 & -0.003 & 0.100 & 0.948 \\
Outcome compression & Design erasure & 1.062 & 1.116 & 0.054 & 0.108 & 0.894 \\
Outcome compression & Admissible refinement & 1.062 & 1.056 & -0.005 & 0.095 & 0.947 \\
Outcome compression & Oracle state & 1.062 & 1.063 & 0.001 & 0.092 & 0.949 \\
Nonclassical representation & Design erasure & 1.068 & 1.113 & 0.045 & 0.101 & 0.921 \\
Nonclassical representation & Admissible refinement & 1.068 & 1.067 & -0.001 & 0.091 & 0.952 \\
Nonclassical representation & Oracle state & 1.068 & 1.063 & -0.005 & 0.089 & 0.957 \\
\bottomrule
\end{tabular}
\end{table}

Table \ref{tab:sim-substitution} focuses on the no-free-substitution theorem. The positivity-stress variant strengthens the association between design variables and treatment assignment. The outcome-compression variant removes the direct effect of the design variable on the outcome mean, making design erasure less harmful. The nonclassical-representation variant makes the refined generated state biased by design status. The results follow the covariance logic in Equation \eqref{eq:drift}: design erasure is most damaging when omitted design information jointly predicts assignment and potential-outcome means. These stylized scenarios do not exhaust representation failure modes; they isolate structural violations so that the resulting estimand drift can be interpreted.

We also simulated claim-status drift under no true generated-covariate moderation. In each replication, \(K=25\) pre-action candidate covariates were generated with no true interaction. The same-data procedure selected the covariate with the largest absolute interaction statistic and reported the ordinary nominal 95 percent interval for that selected interaction. The split-sample procedure selected on one half of the data and estimated on the other half.

\begin{table}[htbp]
\centering
\caption{Controlled claim-status stress test under no true generated-covariate moderation. Results are based on 1000 replications.}
\label{tab:sim-selection}
\small
\begin{tabular}{lrrrr}
\toprule
Procedure & Mean estimate & RMSE & Coverage & Mean selected \(|t|\) \\
\midrule
Same data selection and inference & 0.004 & 0.198 & 0.273 & 2.270 \\
Split selection and independent inference & -0.003 & 0.121 & 0.942 & 0.776 \\
\bottomrule
\end{tabular}
\end{table}

The same-data procedure had severe undercoverage because the selected covariate was chosen using the outcome. The result illustrates why claim status is part of the causal type discipline rather than a reporting label added after analysis.

\section{Relationship to existing literature}

The framework relies on standard potential-outcome and longitudinal causal inference ideas \cite{robins1986,robins2000,hernan2020,pearl2009}. It is related to micro-randomized trials and causal excursion effects \cite{klasnja2015,boruvka2018,qian2021,qian2022}. That literature defines and estimates proximal or excursion effects in repeated decision settings. The present paper asks a prior question: when high-dimensional generated covariates are available before and after decision points, which of them may define the state for such effects, which change the estimand, and which can only support exploratory claims?

The closest mathematical antecedents of Theorem 2 and Corollary 1 are the deconfounding-score results of D'Amour and Franks \cite{damour2021} and Clivio et al. \cite{clivio2026}. Their covariance criteria already extend beyond balancing and prognostic scores. The present use of that mechanism is to audit replacement of sequential state-design information by a generated representation, and to distinguish compression drift from changes in the target conditional law or standardization measure. Our proposed contribution is the integration of causal types, claim status, a locked estimand, and audit rules for generated covariates; it is not discovery of the covariance mechanism or zero-covariance compression criterion.

The claim-status component is related to post-selection inference \cite{lee2016,kuchibhotla2022}. Its role in the causal type discipline is to prevent exploratory generated covariates from being reported as confirmatory effect modifiers without accounting for selection. The estimation section connects to semiparametric efficiency, double/debiased machine learning, and targeted learning \cite{tsiatis2006,chernozhukov2018,vdl2006}. The novelty is not the orthogonal score itself; it is the integration of orthogonal estimation with an estimand-preserving type discipline for generated covariates.

Generated covariates also connect to causal representation learning, text-as-data causal inference, and causal inference in natural language processing \cite{johansson2016,scholkopf2021,veitch2020,keith2020,egami2022,feder2022}. Much of that literature asks how to learn representations for causal adjustment, text-based measurement, or counterfactual prediction. Treatment leakage in text-based causal inference is especially close to the post-action leakage problem studied here: text or narrative representations may encode treatment, response, or outcome information while being used as pre-action confounders \cite{daoud2022}. When a design-relevant state is unmeasured rather than compressed away, proximal causal inference shows that suitably structured proxy variables can restore identification \cite{miao2018}; causal typing is complementary, because it records whether a generated variable is proposed as state, as a proxy for unmeasured state, or in a role that changes the estimand. The point here is narrower and operational: a representation cannot be used for a causal claim until its source window, design role, downstream status, and claim status are fixed.

A related line of work studies inference after machine-learning predictions, AI/ML-generated covariates, or generated regressors enter downstream analyses \cite{fong2021,battaglia2024,angelopoulos2023,escanciano2023,christensen2026}. Those methods address prediction error, first-stage uncertainty, measurement error, validation-sample calibration, or local robustness once a generated variable has been accepted for an analytic role. Duan and Pelger \cite{duanpelger} describe, in their public abstract, moment-specific bias correction using a small human-labeled calibration set; this addresses downstream inferential correction and complements the role-admissibility question here. The present framework asks the preceding question: which causal role the generated variable may play. Causal typing is not a substitute for generated-regressor inference. It decides whether the variable belongs in the causal analysis at all; generated-regressor methods then address uncertainty once that role has been accepted.

Clinical trial estimand guidance is also relevant because it emphasizes the alignment of trial objective, intercurrent events, analysis, and sensitivity analysis \cite{ich2019}. The present framework generalizes that discipline to generated representations: the estimand is locked before covariate use, and the audit records which outputs of the representation map are compatible with the lock. AI-specific trial reporting guidance, including CONSORT-AI and SPIRIT-AI, emphasizes transparent description of AI inputs, outputs, human-AI interaction, and error cases \cite{liu2020consort,rivera2020spirit}. Regulatory AI guidance similarly emphasizes clear context of use, risk-based credibility assessment, data governance, documentation, and life-cycle management \cite{fda2025ai,fdaema2026}. Causal typing addresses a complementary inferential question: once an AI-derived variable exists, what causal role may it play in the locked analysis?

\section{Implementation as an audit object}

For reproducibility, the type discipline should be recorded as an auditable object. The record states the estimand lock, action milestone, source windows, representation versions, causal types, claim status, admissibility decisions, and reasons for exclusion or downgrading. It supports analysis plans, data dictionaries, codebooks, pipelines, and software tools without fixing a specific software implementation.

Table \ref{tab:audit-short} gives a simulation-grounded audit record for variables in the stress-test data-generating mechanism in Section 9. It is not a clinical vignette and does not represent a real or fictional patient record.

\begin{longtable}{>{\raggedright\arraybackslash}p{0.16\linewidth}>{\raggedright\arraybackslash}p{0.31\linewidth}>{\raggedright\arraybackslash}p{0.08\linewidth}>{\raggedright\arraybackslash}p{0.33\linewidth}}
\caption{Simulation-grounded causal type audit for the locked standardized total proximal effect.}\label{tab:audit-short}\\
\toprule
Feature & DGP source & Type & Primary-lock use \\
\midrule
\endfirsthead
\toprule
Feature & DGP source & Type & Primary-lock use \\
\midrule
\endhead
\(D_{is}\) & Pre-action design or availability variable used by assignment. & \(D\) & Retain in \(V\); do not replace by a predictive representation. \\
\(U^{(0)}_{is}\) & Coarse pre-action state used by the assignment mechanism. & \(B/U\) & Retain as part of the locked state-design object. \\
\(U^{(1)}_{is}\) & Refined generated pre-action state marker. & \(U\) & May augment \(V\) only if \(D_{is}\) is retained; not a substitute for \(D_{is}\). \\
\(M_{is}\) & Post-action mediator generated after \(A_{is}\). & \(M\) & Exclude from primary total-effect adjustment; use under a mediation lock. \\
\(L_{is}\) & Post-action leakage covariate derived from \(M_{is}\). & \(M\)/Obs & Exclude from pre-action state; it conditions on treatment-specific post-action strata. \\
Selected interaction marker & Chosen using the same outcome data. & \(U\) & Exploratory only; not confirmatory without sample splitting, simultaneous inference, or valid selective inference. \\
\bottomrule
\end{longtable}

The entries are DGP-defined rather than patient-like examples. In the admissible-refinement specification, \(U^{(1)}_{is}\) augments the state only after \(D_{is}\) is retained. In the design-erasure specification, the same refinement is problematic because it substitutes for design-relevant information.

This table also marks the boundary with adjacent literatures. Generated-regressor methods begin after a generated variable has been accepted into an analysis; this audit decides whether that causal role is allowed. Text-as-data methods study how representations are measured or learned; the audit records whether the resulting representation is state, design information, mediator, outcome proxy, observation process, exploratory signal, or excluded material under the lock.

\section{Reverse audit for hypothesis generation}

The primary role of the type discipline is protective: it prevents a generated covariate from entering a confirmatory analysis in a role that would change the locked estimand. Protection, however, should not be confused with deletion. A covariate that is inadmissible for the current locked estimand may still be scientifically informative if the audit records why it failed. We call this secondary use a reverse audit. It creates a hypothesis queue for future work; it does not change the current estimand and does not certify a new causal effect.

Three cases are especially useful in sequential clinical and digital experiments. First, a post-action variable proposed as pre-action state may be a candidate mediator or early mechanism marker. It must be excluded from the primary total-effect adjustment set, but it can motivate a separate mediation lock or a mechanistic substudy. Second, a leakage feature extracted after assignment may indicate source-window contamination, outcome proxying, or a biologically interesting early response after the decision point. These possibilities have different interpretations and require temporal validation before use. Third, a generated representation that cannot safely replace design variables may reveal unrecorded decision rules, availability judgements, site practices, or tacit clinical information encoded in notes or workflows. Such signals are not proof that the representation is causally admissible; they are reasons to improve measurement, design documentation, or the next experiment.

\begin{longtable}{>{\raggedright\arraybackslash}p{0.21\linewidth}>{\raggedright\arraybackslash}p{0.26\linewidth}>{\raggedright\arraybackslash}p{0.21\linewidth}>{\raggedright\arraybackslash}p{0.21\linewidth}}
\caption{Type-violation logs as exploratory discovery signals. A signal is inadmissible for the current locked estimand but may guide a new lock, validation study, or next-trial design.}\label{tab:reverse}\\
\toprule
Audit finding & Why it is not confirmatory for the current lock & Exploratory signal & Required follow-up \\
\midrule
\endfirsthead
\toprule
Audit finding & Why it is not confirmatory for the current lock & Exploratory signal & Required follow-up \\
\midrule
\endhead
Post-action mediator proposed as pre-action state & Adjusting for it can remove part of the total effect or condition on a descendant of treatment & Candidate mechanism or pathway marker & New mediation estimand, temporal source-window validation, sensitivity analysis \\
Post-action leakage feature with strong prediction & The compared groups condition on treatment-specific post-action strata & Source contamination, outcome proxying, or early response marker & Pre-specified window, external validation, or new early-response estimand \\
Generated representation erases design information & Predictive sufficiency does not imply exchangeability or positivity sufficiency & Unrecorded availability rule, site practice, clinician judgement, or tacit workflow signal & Compare with recorded design variables; improve design documentation; test safe compression \\
Outcome-selected effect modifier & The same data selected and tested the feature & Candidate state dependence & Sample splitting, selective inference, external validation, or pre-specification in the next study \\
Generation-unstable generated marker & The representation is not version-stable under the lock & Measurement error, extraction drift, or ambiguous source semantics & Version locking, repeated extraction audit, validation data, measurement-error analysis \\
\bottomrule
\end{longtable}

A reverse audit makes weaker claims than causal discovery or trial design. A type violation can suggest a hidden mechanism, but it can also indicate an error in timing, documentation, coding, observation intensity, or feature extraction. Therefore the output of a reverse audit should be assigned claim status ``exploratory'' or ``next-trial'', never ``confirmatory'', unless a new estimand lock and a valid design or inferential adjustment are supplied. Turning a downgraded signal into a randomized treatment version requires operational treatment versions, feasibility constraints, and a separate design calculation. The value here is narrower: potentially useful signals are preserved without contaminating the current causal claim.

\section{Discussion}

Generated covariates can improve causal analyses, especially when raw longitudinal context is too complex for simple hand-coded variables. The danger is that richer prediction can obscure causal role. A covariate extracted by an AI system may be clinically meaningful and highly predictive, yet still be a mediator, outcome proxy, design variable, or exploratory selected covariate. Treating all generated covariates as ordinary covariates invites estimand drift.

The same problem also arises in human clinical interpretation. Clinicians routinely integrate symptoms, narrative, tone, expectation, family context, treatment history, and tacit impressions into coherent causal stories. This is a strength of clinical medicine, not an error by itself. The problem begins when such narratives are promoted to confirmatory causal claims without specifying whether the underlying variable is a treatment version, pre-action state, prior history, mediator, outcome proxy, observation process, intercurrent event, or design variable. The present framework therefore applies to both AI-generated covariates and clinician-generated causal narratives. Its purpose is not to suppress clinical judgement but to make the causal role of that judgement explicit before it enters analysis or treatment-effect interpretation.

This distinction is especially important for placebo, context, communication, and behavioral studies. A clinician may observe that patients who trust the clinician respond better, that reassurance after an explanation is associated with symptom improvement, or that a patient seems to improve after a particular visit context. These observations may be clinically valuable. They are not automatically intervention effects. Trust before the encounter may be a pre-action state or history; reassurance after the encounter may be a mediator; clinician enthusiasm may be part of the treatment version only if it is operationally assigned; a post-visit note may contain outcome information; and a site-specific communication pattern may be a design variable. Without typing, the analysis may turn an intuitive clinical explanation into a different estimand.

This conservatism should not be read as a rejection of high-resolution representations. A useful representation map can decompose coarse clinical or behavioral impressions into finer pre-action components, such as nutrient composition and timing rather than a single meal indicator, activity-pattern summaries rather than a vague fatigue label, or environment-specific signals rather than a generic context descriptor. When such components are pre-action, version-stable, and compatible with the state-design object in the lock, causal typing lets them remain as admissible state markers or effect modifiers. The filter thus removes covariates whose roles would change the estimand while preserving fine-grained generated covariates that make the candidate state representation more explicit.

The proposed causal type discipline is conservative, but its endpoint is not rejection. It does not discover causal structure, license interventions on state markers, solve unmeasured confounding, or make high-dimensional conditioning easy. Its purpose is to route human- or AI-generated causal candidates to the appropriate scientific destination: some enter the confirmatory estimand, some become prespecified secondary effect modifiers, some remain exploratory hypotheses, some motivate mediation estimands, some reveal design-audit signals, and some become candidates for future randomized intervention versions. Causal typing thus prevents premature confirmatory claims while preserving scientifically useful signals for the next appropriate inferential or experimental step.

The theory suggests a practical rule for analysis: generated representations should usually augment, rather than replace, design-relevant variables unless a safe-compression condition is justified. If a representation is intended to refine state, analysts should retain the variables that supported randomization, eligibility, positivity, and observation. Similarly, post-action generated covariates should be assigned to mediation, intercurrent-event, observation, or outcome roles rather than moved into pre-action adjustment sets.

Causal typing addresses structural role errors in generated covariates, not all errors introduced by representation learning or human interpretation. Even an admissible pre-action marker may be noisy, unstable across extraction runs, affected by model-version drift, differentially measured across treatment or observation regimes, or based on clinical judgement that varies across raters. Classical measurement error may attenuate effect-modification signals, whereas nonclassical or differential representation error can introduce bias. Generated-regressor methods, repeated extraction, validation-sample calibration, measurement-error correction, and orthogonal or doubly robust estimators remain useful after an admissible representation has been fixed; they do not, by themselves, make a drifting or mismeasured representation causally admissible.

The main limitation is that causal typing depends on human judgement. The framework does not make the classifier \(\calC\) correct by definition; it makes the judgement explicit, auditable, and tied to consequences for the estimand. Certification in a specific retrospective information environment, where many generated variables jointly reconstruct a pre-action information set, is a broader problem outside the local role theory developed here. A related limitation is temporal: real workflows may have several plausible action milestones, so source-window sensitivity or separate locks may be needed. A third limitation is that the orthogonal estimator assumes the relevant pre-action state-design object is sufficient for exchangeability and observation. Unmeasured state variables, interference, and missing-not-at-random mechanisms still require design changes or sensitivity analyses. Finally, the simulation is intentionally stylized. It isolates structural type violations in simple data-generating mechanisms where drift can be interpreted. Nonlinear or high-dimensional representation maps may make the same violations harder to diagnose, but they do not remove the need for causal typing.

In this framework, AI systems are representation-generating procedures, and clinical intuition is a source of candidate explanations. Neither output becomes a confirmatory causal claim until its variables have been assigned causal roles. Once the roles are specified, standard identification and semiparametric estimation tools can be applied to the correct target, and signals that are not yet confirmatory can still be carried forward through reverse audit. Without this step, the analysis may remain clinically plausible and statistically sophisticated while answering the wrong causal question.

\section{Conclusion}

This paper introduced causal typing for generated covariates in sequential experiments. The framework assigns generated covariates to causal roles, fixes an estimand lock, and restricts permissible claims before covariates enter confirmatory analysis. The estimand-preservation theorem shows that admissible causal-role assignments identify the locked standardized proximal effect under standard identification conditions. The drift theorems show why the restrictions matter: mediator adjustment, design erasure, post-action leakage, marker-intervention conflation, and outcome-guided covariate selection each alter either the estimand or the inferential claim. The safe-compression criterion from the deconfounding-score literature is used to audit when a learned representation may replace design-relevant information.

The framework was motivated by AI-derived covariates in repeated-session clinical studies, but its scope is broader. It applies to any setting in which data-derived representations are constructed from longitudinal event streams and used in causal inference. It also applies to human-generated clinical narratives when those narratives are encoded as variables, adjustment choices, subgroup explanations, or causal claims. Its goal is to make generated covariates analysis-ready without allowing them to silently rewrite the causal question. The audit also routes non-confirmatory signals into clearly labelled mechanism hypotheses, early-response signals, unrecorded-design signals, or next-trial candidates. That secondary use is a disciplined record for future work, not a shortcut from prediction or intuition to causal discovery.

\section*{Data and code availability}

The paper is a theoretical proposal with artificial simulations. No real patient data are used. The framework is software-agnostic. The supplementary reproducibility package includes simulation scripts, package requirements, random seeds, replicate-level results, summary tables with Monte Carlo standard errors, verification scripts, a clean execution log, SHA256 file hashes, license information, and citation metadata.

\appendix

\section{Participant and session targets}

Let \(\Delta_{is}=m_1(V_{is})-m_0(V_{is})\). A participant-weighted empirical target is
\[
\Psi_{\mathrm{part}}=\frac{1}{N}\sum_{i=1}^N\frac{1}{S_i}\sum_{s=1}^{S_i}\Delta_{is}.
\]
A session-weighted empirical target is
\[
\Psi_{\mathrm{sess}}=\frac{1}{\sum_iS_i}\sum_{i=1}^N\sum_{s=1}^{S_i}\Delta_{is}.
\]
They coincide only when \(S_i\) is constant or when participant-specific session counts are unrelated to the session-level effects in the relevant weighted sense. The estimand lock should specify which target is intended before analysis.

For \(\Psi_{\mathrm{part}}\), Theorem 5 applies to \(h_{is}=1/S_i\) when the schedule is known before the first allocation and included in \(V_{is}\). If the realized count depends on an earlier outcome, its reciprocal is generally not an admissible weight for that earlier session. Prespecifying the stopping rule does not make the final count predictable at every allocation; such stopping requires a separate observation or weighting analysis.

Even with predeclared schedules, the session-weighted estimator uses a common sample denominator. Write \(\overline S_N=N^{-1}\sum_iS_i\), \(T_i=\sum_s\phi_{is}(1,0)\), and \(U_i=\sum_s\phi^{\mathrm{res}}_{is}(1,0)\). Under the same bounded-schedule, moment, cross-fitting, and nuisance-rate conditions, the ratio estimator \(\widehat\Psi_{\mathrm{sess}}=\sum_{i,s}\widehat\phi_{is}(1,0)/\sum_iS_i\) satisfies
\[
\sqrt N(\widehat\Psi_{\mathrm{sess}}-\Psi_{\mathrm{sess}})
=\frac{1}{\overline S_N\sqrt N}\sum_i U_i+o_p(1)
\]
for the empirical target defined above. Define the superpopulation session target by
\[
\Psi_{\mathrm{sess}}^{\supop}=\frac{\E[\sum_s\Delta_{is}]}{\E(S_i)}.
\]
Its expansion is instead
\[
\sqrt N(\widehat\Psi_{\mathrm{sess}}-\Psi_{\mathrm{sess}}^{\supop})
=\frac{1}{\sqrt N}\sum_i
\frac{T_i-\Psi_{\mathrm{sess}}^{\supop}S_i}{\E(S_i)}+o_p(1).
\]
These formulas follow by expanding the ratio around its positive mean denominator. In particular, random session normalization contributes to the superpopulation influence function and cannot be treated as fixed cluster-local weighting.

\section{Proof details for orthogonal estimation}

For one treatment level \(a\), write
\[
\phi_{is,a}=\frac{1(A_{is}=a)R_{is}}{\pi_a(V_{is})\rho_a(V_{is})}
\{Y_{is}-\mu_a(V_{is})\}+\mu_a(V_{is}).
\]
Because \(h_{is}\) is measurable with respect to \(V_{is}\), at the true nuisance functions
\begin{align*}
&\E\left[h_{is}\frac{1(A_{is}=a)R_{is}}{\pi_a(V_{is})\rho_a(V_{is})}
\{Y_{is}-\mu_a(V_{is})\}\mid V_{is}\right]\\
&\quad=h_{is}\frac{P(A_{is}=a,R_{is}=1\mid V_{is})}
{\pi_a(V_{is})\rho_a(V_{is})}
\E[Y_{is}-\mu_a(V_{is})\mid A_{is}=a,R_{is}=1,V_{is}]=0.
\end{align*}
Summing over the predeclared schedule gives mean-zero weighted cluster residuals. The same argument holds conditional on each fold's training data because the weighting rule is fixed and clusters are independent.

For an outcome-regression perturbation \(\delta_\mu(V_{is})\), the derivative of the weighted expectation is
\[
\E\left[\sum_s h_{is}\left\{1-
\frac{1(A_{is}=a)R_{is}}{\pi_a(V_{is})\rho_a(V_{is})}\right\}
\delta_\mu(V_{is})\right]=0.
\]
Derivatives with respect to \(\pi_a\) and \(\rho_a\) vanish by the weighted residual identity above. In particular, an unweighted mean-zero calculation alone would not justify this step for arbitrary trajectory-dependent weights.

For each evaluation fold, conditioning on its training data and then on \(V_{is}\) gives the exact mean score error
\[
\E[h_{is}(\widehat\phi_{is,a}-\phi_{is,a})\mid V_{is},\text{training}]
=h_{is}(\widehat\mu_a-\mu_a)(V_{is})
\left\{1-\frac{\pi_a(V_{is})\rho_a(V_{is})}
{\widehat\pi_a(V_{is})\widehat\rho_a(V_{is})}\right\}.
\]
Since the probabilities are bounded and bounded away from zero, its summed expectation is bounded in absolute value by a constant times
\[
\|\widehat\mu_a-\mu_a\|_{2,h}
\bigl(\|\widehat\pi_a-\pi_a\|_{2,h}
+\|\widehat\rho_a-\rho_a\|_{2,h}\bigr).
\]
The comparison arm has the same bound. Individual nuisance consistency and the conditional residual-moment bound also imply \(L_2\) convergence of each estimated cluster score: use \((\sum_s h_{is}x_s)^2\leq\sum_s h_{is}x_s^2\). Conditional on the training sample, the centered evaluation-fold mean of this score error is therefore \(o_p(N^{-1/2})\). A fixed number of folds preserves this order. Subtracting either \(\Psi_N\) or \(\Psi\) gives the two displayed asymptotic expansions in Theorem 5. The independent-cluster central limit theorem and \(L_2\) convergence of the estimated cluster contributions establish the normal limits and variance consistency.

The predictability restriction is essential. With two sessions, let \(A_{i1}\) be Bernoulli \(1/2\), let \(U_i\) be an independent symmetric sign, and set \(Y_{i1}(0)=Y_{i1}(1)=U_i\) and \(Y_{i2}(0)=Y_{i2}(1)=0\), with complete observation. All treatment effects are zero. If later history is used to define
\[
h_{i1}=\tfrac12+\tfrac14(2A_{i1}-1)U_i,\qquad h_{i2}=1-h_{i1},
\]
the true first-session contrast score is \(2(2A_{i1}-1)U_i\), yet its weighted expectation is \(1/2\). These positive, normalized weights violate \(\sigma(h_{i1})\subseteq\sigma(V_{i1})\subseteq\mathcal F^-_{i1}\), and are excluded by the theorem.

\section{Simulation reproducibility}

The simulation tables reported in Tables 4-6 were regenerated from the supplementary reproducibility package using 1000 Monte Carlo replications, master seed 20260609, and a deterministic seed schedule. The reported summaries are computed from replicate-level CSV files: 15,000 rows for Table 4, 12,000 rows for Table 5, and 2,000 rows for Table 6.

For a coverage estimate based on \(R\) Monte Carlo replications, the Monte Carlo standard error is approximately \(\{\widehat p(1-\widehat p)/R\}^{1/2}\). At \(\widehat p=0.95\) and \(R=1000\), this is approximately 0.0069, so differences of roughly one percentage point around nominal coverage should be interpreted cautiously.

The reproducibility package includes the simulation script, an optional relative-path parallel runner for Tables 5 and 6, replicate-level results, summary tables, a clean execution log, a verification script, and SHA256 file hashes. The observation mechanism in the simulation script depends on observed pre-action and history variables. A latent-state observation mechanism is treated only as a latent-state sensitivity stress test, not as observed-data MAR.

\section{Checklist for a causal type audit}

A minimal audit should answer the following questions before outcome modeling begins.
\begin{enumerate}
\item What are the treatment versions \(a\) and \(a'\)?
\item What is the action time that separates pre-action from post-action information?
\item Does any source window overlap recommendation formation, assignment recording, patient communication, or intervention delivery? If so, is the covariate treated as mixed-window or analyzed under separate action-milestone locks?
\item What is the target distribution: participant-weighted, session-weighted, restricted, or transported?
\item Which generated covariates are treatment versions, core state, candidate state, history, design variables, mediators, outcomes, missingness indicators, intercurrent events, or excluded descriptors?
\item Which covariates were pre-specified and which were generated or selected after looking at outcomes?
\item Which design variables determine assignment, availability, eligibility, observation, or positivity?
\item Which post-action covariates are forbidden in the primary total-effect adjustment set?
\item What missingness and intercurrent-event strategies are part of the estimand lock?
\item What claims are confirmatory, secondary, exploratory, next-trial hypotheses, or non-causal descriptions?
\item Is the full representation map versioned so that another analyst can reproduce the covariates?
\item If a representation substitutes for design variables, what safe-compression argument is being made?
\item If type assignment is uncertain or auditors disagree, what conservative primary role was chosen, and which alternative locks are reported as sensitivity analyses?
\item Has the generated representation been checked for generation instability, model-version drift, measurement error, or differential extraction error?
\item If a covariate is excluded or downgraded because of a type violation, is the violation reason recorded?
\item Should the excluded or downgraded covariate be carried forward as a candidate mechanism, early-response marker, unrecorded-design signal, source-window problem, or next-trial intervention candidate?
\item What new estimand lock, temporal validation, external validation, sample splitting, selective inference, or randomized intervention version would be required before that signal could support a causal claim?
\end{enumerate}

\section{Structured audit-template fields}

A full audit template may be provided as supplementary material or in a public repository. The essential fields are: treatment versions, action time, outcome time, target distribution, intercurrent-event strategy, missingness strategy, claim status, feature name, source window, representation version, causal type, claim status for the feature, whether the feature is retained in the primary state-design object, whether it is admissible for the primary total-effect analysis, the violation that would occur if it were misused, the scientific follow-up, and the validation required before confirmatory use. This appendix records the fields without prescribing a particular file format.

\begingroup
\setstretch{1.0}
\endgroup

\end{document}